\definecolor{darkblue}{rgb}{0.0,0.0,0.5}
\definecolor{darkgreen}{rgb}{0.0,0.4,0.0}
\newcommand{\paragrapho}[1]{\vspace{0.25em}\noindent\textbf{#1}\hspace{0.5em}}
\newcommand{\laDP}{\textit{discrete parameterised monads}}
\newcommand{\laLM}{\textit{cat-graded monads + generalised unit}}
\newcommand{\laM}{\textit{cat-graded monads}}
\newcommand{\laMo}{\textit{monads}}
\newcommand{\laP}{\textit{parametrised monads}}
\newcommand{\laG}{\textit{unordered graded monads}}
\newcommand{\latwo}{\textit{2-cat-graded monads}}
\newcommand{\laGo}{\textit{graded monads}}
\newcommand{\latwoLM}{\textit{2-cat-graded monads + generalised unit}}
\title{Unifying graded and parameterised monads}
\author{Dominic Orchard
\institute{School of Computing, University of Kent}
\and
Philip Wadler
\institute{School of Informatics, University of Edinburgh}
\and
Harley Eades III
\institute{Department of Computer Science, Augusta University}}
\newcommand{\op}{\mathsf{op}}
\newcommand{\Gm}{\mathsf{G}}
\newcommand{\Pm}{\mathsf{P}}
\newcommand{\F}{\mathsf{F}}
\newcommand{\D}{\mathsf{D}}
\newcommand{\G}{\mathsf{G}}
\newcommand{\T}{\mathsf{T}}
\newcommand{\TS}{\mathsf{S}}
\newcommand{\M}{\mathsf{M}}
\newcommand{\N}{\mathsf{N}}
\newcommand{\ixcat}{\mathbb{I}}
\newcommand{\jxcat}{\mathbb{J}}
\newcommand{\basecat}{\mathbb{C}}
\newcommand{\basecatD}{\mathbb{D}}
\newcommand{\inj}{\textit{in}}
\newcommand{\id}{\textit{id}}
\newcommand{\Id}{\textsf{Id}}
\newcommand{\Endo}[1]{\textbf{Endo}(#1)}
\newcommand{\sing}[1]{\Endo{#1}} 
\newcommand{\objs}[1]{#1_0}
\newcommand{\morphs}[1]{#1_1}
\newcommand{\twomorphs}[1]{#1_2}
\newcommand{\cats}{\mathbf{Cat}}
\newcommand{\interp}[1]{\llbracket{#1}\rrbracket}
\newcommand{\geneta}{\bar{\eta}}
\newcommand{\mone}{a}
\newcommand{\synLet}[3]{\textbf{let} \; #1 \leftarrow #2 \; \textbf{in} \; #3}
\newcommand{\mset}{M}
\newcommand{\mmult}{\bullet}
\newcommand{\munit}{e}
\newcommand{\icirc}{}
\newcommand{\trule}[1]{(\textit{\scriptsize{#1}})}
\newcounter{results}
\theoremstyle{definition}
\newtheorem{proposition}[results]{Proposition}
\newtheorem{corollary}[results]{Corollary}
\newtheorem{definition}[results]{Definition}
\newtheorem{notation}[results]{Notation}
\newtheorem{example}[results]{Example}
\newtheorem{remark}[results]{Remark}
\newcommand{\point}{\ast}
\newif\iflong
\begin{document}
\maketitle

\begin{abstract}
  Monads are a useful tool for structuring
  effectful features of computation such as state, non-determinism,
  and continuations.  In the last decade,
  several generalisations of monads have been suggested which provide a more
  fine-grained model of effects by replacing the single
  type constructor of a monad with an indexed family of constructors.
  Most notably, \emph{graded monads} (indexed by a monoid) model
  effect systems and \emph{parameterised monads} (indexed by pairs
  of pre- and post-conditions) model program logics. 
%
  This paper studies the relationship between these two
  generalisations of monads via a third generalisation.
  This third generalisation, which we call \emph{category-graded
    monads}, arises by generalising a view of monads as a particular
special case of \emph{lax functors}.
  A category-graded monad provides a family of functors $\T \, f$ indexed by
  morphisms $f$ of some other category. This allows certain
  compositions of effects to be ruled out (in the style of a program
  logic) as well as an abstract description of effects (in the style
  of an effect system).  Using this as a basis, we show how graded
  and parameterised monads can be unified,
  studying their similarities and differences along the way.
%
\end{abstract}

\section{Introduction}
\noindent

\noindent
Ever since \citet{moggi1989monads,moggi1991monads}, monads have become
an important structure in programming and semantics, particularly for
capturing computational effects. In this approach, an effectful
computation yielding a value of type $A$ is modelled by the type
$\M A$ where $\M$ is an endofunctor with the structure of a monad,
i.e., with two operations (and some axioms): \emph{unit} ($\eta$)
mapping values into pure computations (with no effects) and
\emph{multiplication} ($\mu$) composing effects of two computations
(one nested in the other):
\begin{align*}
\eta_A : A \rightarrow \M A \quad & \quad \mu_A : \M \, \M \, A \rightarrow
                                    \M \, A
\end{align*}
Various generalisations of monads replace the single endofunctor $\M$
with a family of endofunctors indexed by effect information. For
example, \citet{wadler2003marriage} proposed a generalisation of
monads to model effect systems using a family of monads
$(\M^\mone)_{\mone \in E}$ indexed over a \emph{bounded semilattice}
$(E, \sqcup, \emptyset, \sqsubseteq)$. Later, \emph{graded monads}
were proposed (\cite{katsumata2014parametric,DBLP:journals/corr/OrchardPM14}), generalising this idea
to a family of endofunctors $(\Gm \, e)_{e \in E}$ indexed by the
elements of an ordered monoid $(E, \bullet, I, \sqsubseteq)$ with monad-like
unit and multiplication operations (and axioms) mediated by the monoid
structure:
\begin{align*}
\eta_A : A \rightarrow \G \, I \, A & \qquad
\mu_{e,f,A} : \, \G \, e \,\, \G \, f \, A \rightarrow  \G \, (e \bullet f) \, A
  &
\G \, (e \sqsubseteq f) A : \G \, e \, A \rightarrow \G \, f \, A
\end{align*}
A family of approximation maps $\G \, (e \sqsubseteq f)$ is derived from the ordering.

Another indexed generalisation is provided by
\emph{parameterised monads}, which comprise a family
of endofunctors $(\Pm (I, J))_{I \in \ixcat^\op, J \in \ixcat}$ indexed by pairs of
objects drawn from a category $\ixcat$ which provides information akin
to pre- and post-conditions~(\cite{wadler1994monads,atkey2009parameterised}).
Parameterised monads have unit and multiplication operations
(satisfying analogous axioms to monads):
\begin{align*}
\hspace{-1em}
\eta_{I\!,A} \!:\! A\!\rightarrow\!\Pm (I\!, I) A & \qquad
\mu_{I\!,J,K,A} \!:\! \Pm (I\!, J) \, \Pm (J\!, K) A\!\rightarrow\!\Pm (I\!, K) A
& \Pm (f, g) A \!:\! P (I\!, J) A\!\rightarrow\! P(I'\!, J') A
\end{align*}
The family of maps $\Pm (f, g) A$
(for all $f : I' \rightarrow I, g : J \rightarrow J'$) provides
a notion of approximation via pre-condition
strengthening and post-condition weakening.
For pure computations via $\eta$, the pre-condition is
preserved as the post-condition. For composition via $\mu$,
the post-condition of the outer computation must match the
pre-condition of the inner, yielding a computation indexed by
the pre-condition of the outer and post-condition of the inner.

Graded and parameterised monads are used for various kinds of fine-grained effectful
semantics, reasoning, and programming. For example, graded monads model effect
systems
(\cite{katsumata2014parametric,mycroft2016effect}),
trace semantics (\cite{miliusgeneric}), and session types
(\cite{orchard2016effects}). Parameterised monads are used to refine effectful
semantics with Floyd-Hoare triples (\cite{atkey2009parameterised}),
information flow tracking (\cite{stefan2011flexible}), and
session types (\cite{Pucell2008,imai2011session}).
In GHC/Haskell, graded and parameterised monads
are provided by the \texttt{effect-monad}
package,\footnote{\url{https://hackage.haskell.org/package/effect-monad}}
leveraging GHC's advanced type system features particularly in the case
of graded monads.


Both structures appear to follow a similar pattern, generalising
monads into some indexed family of type constructors with monad-like
operations mediated by structure on the indices. Furthermore, there
are applications for which both graded and parameterised monads
have been used, notably {session
  types}.\footnote{\cite{orchard2017session} provide a survey of
  different structuring techniques for session types in the context of
  Haskell.} Thus, one might naturally wonder how graded and
parameterised monads are related.
We show that they can be related
by a structure we call \emph{category-graded
  monads}, based on a specialised class
of \emph{lax functors}.

Whilst graded monads are indexed by elements of a monoid and
parameterised monads by pairs of indices, category-graded monads are
indexed by the morphisms of a category, and have operations:
\begin{align*}
\!\!\!\eta_{I,A} : A \rightarrow \T \, id_I \, A \qquad &
\;\;\, \mu_{f,g,A} : \T f \, \T g \, A \rightarrow \T \, (g \circ f) \,  A
\;\;\, & \T (\textbf{f} : j \Rightarrow k) A : \T j \, A \rightarrow
         \T k \, A
\end{align*}
for $f : I \rightarrow J$, $g : J \rightarrow K$,
$j, k : I' \rightarrow J'$. Indexing by morphisms provides a way to
restrict composition of effectful computations and a model that
captures various kinds of indexing. The rightmost operation
is provided by \emph{2-category-graded monads}
(generalising grading to 2-categories)
where $\T \mathbf{f}$ lifts a
2-morphism $\mathbf{f}$ (morphism between morphisms), providing a
notion of approximation akin to graded monads.



The structures we study along with their relationships are summarised by the
diagram below, where the source of an arrow is at a more specific
structure, and the target is a more general structure.  Highlighted in
bold are the new structures in this paper, centrally category-graded
monads and its generalisation to \emph{2-category-graded monads} and the
additional structure of \emph{generalised units}.  Throughout,
we state the results of the diagram below as propositions of the form
``\emph{every A is a B}'' whenever there is an arrow from $A$ to $B$
in the diagram. By this we mean there is an injective map from $A$ structures
into $B$ structures.

\begin{wrapfigure}[10]{r}[-5px]{250px}
  \vspace{-17px}
  \footnotesize
  \begin{align*}
    \xymatrix@C=-0.6em@R=1em{
      & \textbf{\txt<7pc>{\latwoLM}} & \\
      \textbf{\txt<7pc>{\latwo}} \ar@/^0.9em/[ur]  & & \ar@/_0.9em/[ul] \textbf{\txt<7pc>{\laLM}} \\
      \laGo \ar[u]   & \ar@/_0.9em/[ul] \textbf{\laM} \ar@/^1em/[ur] & \hspace{0.5em} \laP \ar[u] \\
      *\txt<6.5pc>{\laG} \ar[u]\ar@/_1.7em/[ur] &
      & *\txt<8pc>{\laDP} \ar[u]\ar@/^1.7em/[ul] \\
      & \ar@/^/[ul] \laMo \ar[uu] \ar@/_/[ur] &
    }
  \end{align*}
\end{wrapfigure}

Section~\ref{sec:monadoids} introduces category-graded
monads as a particular way of generalising monads via lax functors.
Section~\ref{sec:graded-monads} considers graded monads and
2-category-graded monads. Section~\ref{sec:lax-nats} considers
the notion of lax natural transformations used
in Section~\ref{sec:loose} which considers parameterised
monads and generalised units.

We show that the subclasses of \emph{unordered} graded monads and
\emph{discrete} parameterised monads are both subsumed by
category-graded monads. However, graded monads may have an ordering which
provides approximation, requiring 2-category-graded monads, and full
(non-discrete) parameterised monads also have a kind of approximation
which can be captured by a generalised
unit for a category-graded monad. Section~\ref{sec:combined-example} shows that the apex of
2-category-graded monads plus generalised units capture analyses and
semantics that have both graded and parameterised monad components,
using an example of a Hoare logic for probabilistic computations.

\section{Background}
We first recall some standard, basic categorical facts relating
monoids to categories which will be used throughout.
We start by fixing some notation.

\begin{notation}[2-Category]
For a 2-category $\basecat$, we denote the class of
  objects as $\objs{\basecat}$, 1-morphisms as $\morphs{\basecat}$
  and 2-morphisms as $\twomorphs{\basecat}$. We write 2-morphism
arrows as $\Rightarrow$ unless they are natural
transformations (2-morphisms in $\mathbf{Cat}$)
which are instead written as $\xrightarrow{.}$.
Appendix~\ref{app:additional-background} provides a definition of
2-categories.
\end{notation}

\begin{notation}[Homset]
The \emph{homset} of (1-)morphisms between any two objects
$A, B \in \objs{\basecat}$ is written $\mathbb{C}(A, B)$.
\end{notation}

There are two standard ways to view monoids in categorical terms:
Recall a set-theoretic presentation of a monoid with set $\mset$, operation
$\mmult : \mset \times \mset \rightarrow \mset$ and unit $\munit \in \mset$.
This is identical to a category $\mathbb{M}$, which
has a single object $\objs{\mathbb{M}} = \{\point\}$,
 morphisms as the elements of $\mset$, i.e., $\morphs{\mathbb{M}} =
M$, composition via the binary operation $\circ = \bullet$ and
 identity morphism $\id_\point = \munit$. Associativity and unit
 properties of categories and monoids align. Hence:

\begin{proposition}\label{prop:monoid-as-singleton-cat}
 Monoids are one-object categories.
\end{proposition}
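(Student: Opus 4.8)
The plan is to exhibit a bijective correspondence between monoids and one-object categories by constructing maps in both directions and checking they are mutually inverse. The forward direction is already sketched in the preceding paragraph: given a monoid $(\mset, \mmult, \munit)$, I would define a category $\mathbb{M}$ with $\objs{\mathbb{M}} = \{\point\}$, a single homset $\mathbb{M}(\point,\point) = \mset$, composition given by $\mmult$, and identity morphism $\id_\point = \munit$. The key step is to verify the two category laws: associativity of composition reduces exactly to associativity of $\mmult$, and the left and right identity laws for $\id_\point$ reduce exactly to the left and right unit laws for $\munit$. Since a one-object category has only one homset, no further coherence conditions arise.

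For the converse I would start from any category $\basecat$ with $\objs{\basecat} = \{\point\}$ and read off the monoid whose carrier is the single homset $\basecat(\point,\point)$, with multiplication given by composition $\circ$ and unit $\id_\point$. Here the category axioms yield the monoid axioms by the same line-by-line matching in the opposite direction. The one point worth noting is that, because $\point$ is the only object, composition is already a total binary operation $\basecat(\point,\point) \times \basecat(\point,\point) \rightarrow \basecat(\point,\point)$, so nothing is lost in treating it as a monoid multiplication on the carrier.

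Finally I would check that the two constructions are mutually inverse: passing from a monoid to its one-object category and back recovers the original carrier, operation, and unit on the nose, and symmetrically starting from a category. This establishes the bijection the proposition asserts. If one wishes to read ``are'' as a statement about the category of monoids and the category of one-object categories, one would additionally confirm that monoid homomorphisms correspond exactly to functors between the associated one-object categories, upgrading the bijection to an isomorphism of categories. I expect no genuine obstacle: the result is essentially a restatement, and the only care required is in confirming that each category axiom has a matching monoid axiom and that the single-object restriction introduces no extra conditions.
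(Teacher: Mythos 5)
Your proposal is correct and matches the paper's own argument: the paper likewise builds the one-object category $\mathbb{M}$ with $\morphs{\mathbb{M}} = M$, $\circ = \mmult$, $\id_\point = \munit$, and observes that the category and monoid axioms align. Your extra verification of the converse direction and mutual inverseness (and the remark about homomorphisms corresponding to functors) simply makes explicit what the paper compresses into the phrase ``this is identical to a category,'' so there is no substantive difference in approach.
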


  An alternate but equivalent view of monoids is as a monoidal category, where
  $(\mset, \mmult, \munit)$ is a category with
  objects $\mset$, bifunctor $\mmult : \mset \times \mset \rightarrow
  \mset$ and unit object $\munit$. This monoidal category is \emph{discrete}, meaning
  the only morphisms are the identities.

\begin{proposition}\label{prop:monoid-as-discrete-monoid-cat}
  Monoids are discrete monoidal categories.
\end{proposition}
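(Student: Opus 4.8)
The plan is to exhibit a bijective correspondence between monoids and discrete monoidal categories, matching structure component-by-component. Starting from a monoid $(\mset, \mmult, \munit)$, I would build the discrete category whose object set is $\mset$ and whose only morphisms are identities, take the unit object to be $\munit$, and define the tensor bifunctor $\otimes$ on objects by $a \otimes b = a \mmult b$. Because the underlying category is discrete, the action of $\otimes$ on morphisms is entirely determined (identities must go to identities), so no further data need be chosen there.

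The next step is to verify the monoidal-category axioms, and the key observation is that discreteness trivialises almost everything. In a discrete category a morphism $X \to Y$ exists only when $X = Y$, so each component of the associator $\alpha_{a,b,c} : (a \otimes b) \otimes c \to a \otimes (b \otimes c)$ can be supplied precisely when $(a \mmult b) \mmult c = a \mmult (b \mmult c)$ --- exactly associativity of $\mmult$ --- and is then forced to be the identity. Likewise the unitors $\lambda_a : \munit \otimes a \to a$ and $\rho_a : a \otimes \munit \to a$ exist (as identities) exactly when the monoid unit laws hold. Functoriality of $\otimes$, naturality of $\alpha$, $\lambda$, $\rho$, and the pentagon and triangle coherence conditions are then automatic, since every morphism in sight is an identity and every required diagram commutes trivially.

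Conversely, from a discrete monoidal category I would read off the object set as the carrier $\mset$, the tensor on objects as $\mmult$, and the unit object as $\munit$. The same discreteness argument runs backwards: the associator and unitor isomorphisms, having identity components, force the equalities $(a \mmult b) \mmult c = a \mmult (b \mmult c)$ and $\munit \mmult a = a = a \mmult \munit$ on the nose, recovering the monoid axioms. These two passages are evidently mutually inverse, which establishes the correspondence.

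I do not expect a genuine obstacle here; the one point deserving care is conceptual rather than computational, namely making explicit that in the discrete setting the coherence isomorphisms degenerate into strict equalities of objects, so that the pentagon and triangle axioms carry no content and the monoidal structure reduces exactly to an associative, unital operation on the object set. Everything else is routine once that reduction is stated.
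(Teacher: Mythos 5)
Your proposal is correct and takes essentially the same route as the paper: the paper's justification is the informal paragraph preceding the proposition, which makes the same identification (objects $=$ monoid elements, tensor $=$ multiplication, unit object $=$ monoid unit, with discreteness forcing everything else). You simply spell out in more detail the point the paper relies on implicitly --- that in a discrete category the coherence isomorphisms collapse to equalities, so the monoidal axioms reduce exactly to the monoid axioms.
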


In this paper, we study graded monads
which are typically defined in the literature as being
indexed by a pre-ordered monoid, or \emph{pomonoid}.
We therefore recall how the above two results generalise to pomonoids.
A monoid $(\mset, \mmult, \munit)$ with a preorder
$\sqsubseteq$ (with $\mmult$ monotonic wrt.
$\sqsubseteq$)\footnote{That
is for all $x_1, x_2, y_1, y_2 \in M$ then
$x_1 \sqsubseteq y_1\ \wedge\
 x_2 \sqsubseteq y_2 \implies (x_1 \mmult x_2) \sqsubseteq (y_1 \mmult y_2)$}
is a 2-category following
Proposition~\ref{prop:monoid-as-singleton-cat}
but with added 2-morphisms between every pair of morphisms
$x, y \in \mset$ whenever $x \sqsubseteq y$.

\begin{proposition}\label{prop:pomonoid-as-2-cat}
Pre-ordered monoids are one-object 2-categories.
\end{proposition}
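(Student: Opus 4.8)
The plan is to construct, from a pre-ordered monoid $(\mset, \mmult, \munit, \sqsubseteq)$ with $\mmult$ monotonic, a one-object 2-category whose underlying one-object category is exactly the one supplied by Proposition~\ref{prop:monoid-as-singleton-cat}, and then to add a thin layer of 2-morphisms recording the order. Concretely, I would take the single object $\point$, the 1-morphisms $\morphs{\mathbb{M}} = \mset$ with composition $\mmult$ and identity $\munit$, and then for each pair $x, y \in \mset$ declare a (necessarily unique) 2-morphism $x \Rightarrow y$ precisely when $x \sqsubseteq y$. Thus the single homcategory $\mathbb{M}(\point, \point)$ is the preorder $(\mset, \sqsubseteq)$ regarded as a thin category.

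First I would check that this homcategory is a well-defined category: vertical composition of $x \Rightarrow y$ with $y \Rightarrow z$ exists by transitivity of $\sqsubseteq$, and the identity 2-cell on each $x$ is supplied by reflexivity. Since there is at most one 2-morphism between any parallel pair of 1-morphisms, associativity and the unit laws for vertical composition are immediate.

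The one step with genuine content is horizontal composition (and hence whiskering). Given 2-cells $x \Rightarrow x'$ and $y \Rightarrow y'$, that is $x \sqsubseteq x'$ and $y \sqsubseteq y'$, I need to produce a 2-cell $x \mmult y \Rightarrow x' \mmult y'$, which amounts to establishing $x \mmult y \sqsubseteq x' \mmult y'$. This is exactly the monotonicity hypothesis on $\mmult$, so this is the place where that assumption is consumed and is the main thing to verify; everything else is coherence bookkeeping. Once both compositions are in place, the remaining 2-category axioms — associativity and unit laws for horizontal composition, together with the interchange law linking the two compositions — all equate parallel 2-morphisms and therefore hold automatically by thinness.

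Finally, for the injectivity required by the ``every $A$ is a $B$'' convention, I would observe that the data of the pre-ordered monoid is recoverable from the constructed 2-category: the 1-morphisms with their composition and identity return $(\mset, \mmult, \munit)$ as in Proposition~\ref{prop:monoid-as-singleton-cat}, and the order is read back off as $x \sqsubseteq y$ iff a 2-morphism $x \Rightarrow y$ exists. Distinct pre-ordered monoids thus yield distinct 2-categories, so the assignment is injective as claimed.
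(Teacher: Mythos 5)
Your construction is exactly the paper's: the one-object 2-category obtained from Proposition~\ref{prop:monoid-as-singleton-cat} by adding a (unique) 2-morphism $x \Rightarrow y$ precisely when $x \sqsubseteq y$. You additionally spell out what the paper leaves implicit---that monotonicity of $\mmult$ is what makes horizontal composition of 2-cells well defined, with the remaining coherence axioms and interchange holding automatically by thinness---so the proposal is correct and essentially the same argument.
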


  Pre-orders are categories, with a morphism for every pair of
  ordered elements. Thus we can replay
  Proposition~\ref{prop:monoid-as-discrete-monoid-cat}, but we
  now have morphisms between some elements representing the ordering and thus
  the category is no longer discrete. However this monoidal
  category is \emph{strict}, meaning that associativity and unit axioms
  are equalities rather than isomorphisms. (Note, {discrete}
  categories are automatically strict).

\begin{proposition}\label{prop:pomonoid-as-strict-monoidal-categories}
  Pre-ordered monoids are strict monoidal categories.
\end{proposition}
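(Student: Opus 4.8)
The plan is to refine the construction behind Proposition~\ref{prop:monoid-as-discrete-monoid-cat}: instead of the discrete category on the carrier, I build a \emph{thin} category from the preorder and then check that monotonicity of $\mmult$ promotes the monoid operation to a bifunctor. First I would view the preorder $(\mset, \sqsubseteq)$ as a category whose objects are the elements of $\mset$ and which has a single morphism $x \to y$ precisely when $x \sqsubseteq y$. Reflexivity of $\sqsubseteq$ supplies the identities and transitivity supplies composition; because each homset has at most one element, associativity and the identity laws hold automatically, so this is indeed a (thin) category.

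Next I would equip this category with monoidal structure, taking the tensor to act on objects by $\mmult$ and the unit object to be $\munit$. On morphisms, given $x_1 \sqsubseteq y_1$ and $x_2 \sqsubseteq y_2$, monotonicity of $\mmult$ yields $x_1 \mmult x_2 \sqsubseteq y_1 \mmult y_2$, hence the required morphism; this is exactly the point at which the monotonicity hypothesis is used. Functoriality of the tensor, namely preservation of identities and of composition, is once again immediate from thinness, since any two parallel morphisms coincide.

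Finally I would observe that the monoid axioms deliver strictness directly. Associativity $(x \mmult y) \mmult z = x \mmult (y \mmult z)$ and the unit laws $\munit \mmult x = x = x \mmult \munit$ hold as equalities on objects, so the associator and the left and right unitors can be taken to be identity morphisms rather than mere isomorphisms. Consequently the pentagon and triangle coherence conditions are satisfied trivially, and the resulting monoidal category is strict.

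I expect no deep obstacle here: the single step demanding care is that the tensor really is a bifunctor, which is precisely where monotonicity of $\mmult$ is indispensable, since without it the action on morphisms would be undefined. Everything else is subsumed by the facts that the category is thin and that the monoid laws hold on the nose, so strictness is inherited rather than proved.
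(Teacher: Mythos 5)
Your proposal is correct and follows essentially the same route as the paper: viewing the preorder as a thin category and replaying the discrete-monoidal-category construction of Proposition~\ref{prop:monoid-as-discrete-monoid-cat}, with monotonicity of $\mmult$ supplying the tensor's action on morphisms and the on-the-nose monoid laws giving strictness. You simply spell out the details (bifunctoriality via thinness, trivial coherence) that the paper leaves implicit.
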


A further categorical view on monoids is that they can be
identified as some distinguished object in a monoidal category:
\begin{definition}\label{def:monoid-in-monoid-cat}
  A monoid in a monoidal category $(\mathbb{C}, \otimes, I)$
  is a distinguished object $M \in \mathbb{C}$ equipped
  with a pair of morphisms $\munit : I \rightarrow M$
  for the unit and $\mmult : M \otimes M \rightarrow M$,
  with associativity and unit axioms.
\end{definition}
The \emph{monoidal category of endofunctors} for some category
is particularly important in this paper:

\begin{definition}\label{prop:endofunctors-are-monoids}
  The category of endofunctors on $\basecat$,
  denoted $[\basecat, \basecat]$, has endofunctors
  as objects and natural transformations as morphisms.
  This is a strict monoidal category with
  $([\basecat, \basecat], \circ,
  \mathsf{Id}_\basecat)$, i.e., with functor composition
  as the bifunctor and the identity endofunctor
  $\mathsf{Id}_\mathbb{C} A = A$ as the unit element.
\end{definition}

The classic aphorism that \emph{monads are just a monoid in the
  category of endofunctors} thus applies
Definition~\ref{def:monoid-in-monoid-cat}
in the context of the monoidal
category of endofunctors $([\basecat, \basecat], \circ,
\mathsf{Id}_\basecat)$, pointing out that a monad for
endofunctor $\T : \basecat \rightarrow \basecat$ is a
particular single object. Since
the tensor product of the monoidal category is $\circ$ then monad
multiplication is the binary operator $\mu : \T \circ \T
\xrightarrow{.} T$ and the monad unit operation identifies the unit
element $\eta : \Id_{\basecat} \xrightarrow{.} T$. Thus:

\begin{proposition}\label{aphorism}
Monads are monoids in the category of endofunctors.
\end{proposition}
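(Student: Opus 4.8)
The plan is to instantiate Definition~\ref{def:monoid-in-monoid-cat} in the strict monoidal category of endofunctors $([\basecat,\basecat], \circ, \Id_\basecat)$ supplied by Definition~\ref{prop:endofunctors-are-monoids}, and then check that the resulting notion of ``monoid'' coincides, in both data and axioms, with the standard definition of a monad. First I would unfold the data. A monoid in $([\basecat,\basecat], \circ, \Id_\basecat)$ is a distinguished object, i.e.\ an endofunctor $\T : \basecat \rightarrow \basecat$, together with two morphisms of the monoidal category, which are natural transformations. Since the monoidal unit $I$ is $\Id_\basecat$, the unit morphism $\munit : I \rightarrow M$ becomes $\eta : \Id_\basecat \xrightarrow{.} \T$; since the tensor $\otimes$ is functor composition $\circ$, the multiplication $\mmult : M \otimes M \rightarrow M$ becomes $\mu : \T \circ \T \xrightarrow{.} \T$. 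This is exactly the data of a monad, so the assignment sending a monoid to its underlying endofunctor equipped with $\eta$ and $\mu$ is manifestly injective; what remains is to match the axioms.

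Next I would translate the monoid associativity and unit axioms into the monad laws, the key input being that the monoidal structure is \emph{strict} (as asserted in Definition~\ref{prop:endofunctors-are-monoids}), so that the associator and the left and right unitors are identities. To carry out the translation I need the action of the tensor bifunctor $\circ$ on morphisms: on natural transformations this is horizontal (Godement) composition, whence $\mmult \otimes \id_\T$ and $\id_\T \otimes \mmult$ specialise to the whiskerings $\mu\T$ and $\T\mu$ respectively. Strictness then collapses the monoid associativity square $\mu \cdot (\mu \otimes \id_\T) = \mu \cdot (\id_\T \otimes \mu) \cdot \alpha$ to the literal equality $\mu \cdot (\mu\T) = \mu \cdot (\T\mu)$, which is precisely the associativity law of a monad. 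Likewise the two unit triangles, with $\lambda$ and $\rho$ trivial, become $\mu \cdot (\eta\T) = \id_\T = \mu \cdot (\T\eta)$, the monad unit laws.

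Finally I would observe that the converse is immediate: every monad provides exactly this data and satisfies exactly these equalities, so the correspondence is in fact a bijection, which is more than enough for the ``is a'' reading of the statement. The step I expect to demand the most care is the bookkeeping of the previous paragraph, namely correctly identifying the bifunctor's action on morphisms as horizontal composition and confirming that the strict coherence produces precisely the whiskerings $\mu\T$, $\T\mu$, $\eta\T$, and $\T\eta$. Once this dictionary between the monoidal-categorical operations and the endofunctor/whiskering operations is pinned down, the axioms match on the nose and no further calculation is needed.
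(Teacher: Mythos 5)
Your proposal is correct and follows essentially the same route as the paper: instantiating Definition~\ref{def:monoid-in-monoid-cat} in the strict monoidal category $([\basecat,\basecat], \circ, \Id_\basecat)$ of Definition~\ref{prop:endofunctors-are-monoids}, identifying the distinguished object with the endofunctor $\T$ and the monoid unit and multiplication with $\eta : \Id_\basecat \xrightarrow{.} \T$ and $\mu : \T \circ \T \xrightarrow{.} \T$. Your treatment is in fact more thorough than the paper's, which stops at matching the data and leaves the axiom correspondence (strictness collapsing the coherence maps, tensor on morphisms as whiskering) implicit.
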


Finally, we leverage the (standard)
equivalence between strict monoidal categories and
2-categories with one object, which
``transposes'' monoidal and 1-categorical composition into
1-categorical and 2-categorical composition respectively.
That is, for a strict monoidal category $(\mathbb{C}, \otimes, I)$
  we can construct a one-object 2-category, call it $1(\mathbb{C})$, with
  $\objs{1(\mathbb{C})} = \point$ and
  $\morphs{1(\mathbb{C})} = \objs{\mathbb{C}}$
    and $\twomorphs{1(\mathbb{C})} = \morphs{\mathbb{C}}$
    with horizontal composition $\circ_0 = \otimes$
    and identity morphism $id = I$,
    and vertical composition $\circ_1 = \circ_{\mathbb{C}}$
    and 2-identities by identities of $\mathbb{C}$.
    Conversely, given a one-object 2-category, $\mathbb{C}$, we can
    construct a strict monoidal
    category $(\mathsf{SMC}(\mathbb{C}), \otimes, I)$ where
    $\mathsf{SMC}(\mathbb{C})_0 = \mathbb{C}_1$, $\mathsf{SMC}(\mathbb{C})_1 = \mathbb{C}_2$,
    $\otimes = \circ_0$, $I$ is the 1-morphism identity,
    composition $\circ = \circ_1$ and identity is the 2-identity.
    The same result applies for discrete monoidal categories
    and one-object categories
    where we can elide the 2-categorical part. Hence:

 \begin{proposition}\label{cor:monoidal-cats-as-single-cats}
  Discrete monoidal categories are equivalent
  to one-object categories, and
  strict monoidal categories are equivalent to one-object
  2-categories.
\end{proposition}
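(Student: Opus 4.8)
The plan is to verify that the two constructions described just above the proposition — the assignment $1(-)$ sending a strict monoidal category to a one-object $2$-category, and the assignment $\mathsf{SMC}(-)$ in the reverse direction — are well-defined on the data and mutually inverse, so that they exhibit an (iso)equivalence between the two kinds of structure. The discrete/one-object-$1$-category case is then just the degenerate version in which the layer of $2$-morphisms (equivalently, the non-identity morphisms of the monoidal category) is trivial, so it suffices to treat the strict-monoidal / one-object-$2$-category correspondence and read off the collapse.

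First I would check that $1(\basecat)$ really is a $2$-category. Almost every axiom is a direct relabelling: vertical composition $\circ_1 = \circ_{\basecat}$ is associative and unital because $\basecat$ is a category, so the unique hom-category of $1(\basecat)$ (on the endo-$1$-morphisms of the single object $\point$) is exactly $\basecat$; horizontal associativity and the left/right unit laws for $\circ_0 = \otimes$ hold on the nose because $\basecat$ is \emph{strict} monoidal, matching the requirement that horizontal composition in a $2$-category be strictly associative and unital. The one genuinely structural point is the interchange law
\[
  (\beta \circ_1 \alpha) \circ_0 (\delta \circ_1 \gamma)
  \;=\;
  (\beta \circ_0 \delta) \circ_1 (\alpha \circ_0 \gamma),
\]
which, under the dictionary $\circ_0 = \otimes$, $\circ_1 = \circ_{\basecat}$, is \emph{exactly} the statement that $\otimes$ is a bifunctor, i.e.\ that it preserves composition in each argument; likewise the preservation of $2$-identities under horizontal composition ($\mathrm{id}_a \otimes \mathrm{id}_b = \mathrm{id}_{a \otimes b}$) is the identity-preservation half of bifunctoriality. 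Thus $1(\basecat)$ satisfies the $2$-category axioms precisely because $\otimes$ is a monoidal bifunctor.

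Conversely, for $\mathsf{SMC}(\basecat)$ I would run the same correspondence backwards: the $2$-morphisms of a one-object $2$-category, composed vertically, form a category, yielding the morphisms and composition of $\mathsf{SMC}(\basecat)$; setting $\otimes = \circ_0$, the recovery of bifunctoriality of $\otimes$ is again exactly the interchange law, with its identity-preservation given by the horizontal-identity law. Strict associativity and unitality of $\otimes$ follow from the corresponding strictness of horizontal composition, so $\mathsf{SMC}(\basecat)$ is a strict monoidal category.

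Finally, mutual inverseness is immediate from the definitions: each construction merely transposes the roles of objects/$1$-morphisms and of morphisms/$2$-morphisms and renames the two composition operators, so composing $1(-)$ with $\mathsf{SMC}(-)$ in either order returns the same underlying data with the same structure maps. This gives an isomorphism of the respective collections of structures, hence the claimed equivalence; discarding the $2$-morphism layer specialises it to the discrete-monoidal / one-object-category statement. The main — and essentially only — obstacle is the interchange-law $\leftrightarrow$ bifunctoriality identification above; once that single correspondence is in place, every remaining check is definitional bookkeeping.
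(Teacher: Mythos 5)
Your proposal is correct and follows essentially the same route as the paper: the paper's argument consists precisely of the two transposition constructions $1(-)$ and $\mathsf{SMC}(-)$ given in the paragraph preceding the proposition, with the axiom-matching left implicit. You supply the one substantive check the paper elides — that the interchange law for $\circ_0, \circ_1$ is exactly bifunctoriality of $\otimes$ (and the 2-identity law is its identity-preservation half) — which is the right identification and completes the same argument in more detail.
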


\begin{corollary}\label{cor:endo-c}
  By Def.~\ref{prop:endofunctors-are-monoids} and
  Prop.~\ref{cor:monoidal-cats-as-single-cats},
  the strict monoidal category of endofunctors $([\basecat, \basecat], \circ,
  \mathsf{Id}_{\basecat})$ is equivalent to a one-object 2-category
  with single object $\mathbb{C}$,
  morphisms as endofunctors and 2-morphisms are natural
  transformations. For clarity, we denote this
  2-category as $\sing{\mathbb{C}}$.
\end{corollary}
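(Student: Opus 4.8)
The plan is to instantiate Proposition~\ref{cor:monoidal-cats-as-single-cats} at the one particular strict monoidal category supplied by Definition~\ref{prop:endofunctors-are-monoids}, and then simply read off the data of the resulting 2-category. First I would note that Definition~\ref{prop:endofunctors-are-monoids} already certifies $([\basecat,\basecat],\circ,\mathsf{Id}_\basecat)$ to be a strict monoidal category, so the hypotheses of the ``strict monoidal $\Rightarrow$ one-object 2-category'' half of Proposition~\ref{cor:monoidal-cats-as-single-cats} are met with no further work; the corollary is therefore nothing more than that proposition specialised to this instance.

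Next I would apply the $1(-)$ construction recalled just before Proposition~\ref{cor:monoidal-cats-as-single-cats} to this category and unfold the resulting 2-category component by component. The single object is $\point$, which, since the whole structure is assembled from $\basecat$, I relabel $\basecat$ for clarity. The 1-morphisms are $\objs{[\basecat,\basecat]}$, i.e.\ the endofunctors on $\basecat$; the 2-morphisms are $\morphs{[\basecat,\basecat]}$, i.e.\ the natural transformations. Horizontal composition $\circ_0$ is the monoidal product $\circ$ (functor composition) with identity 1-morphism $\mathsf{Id}_\basecat$, while vertical composition $\circ_1$ is composition in $[\basecat,\basecat]$, that is the (vertical) composition of natural transformations, with the 2-identities being the identity natural transformations. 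Fixing the notation $\sing{\basecat}$ for the 2-category so obtained then completes the statement.

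The one point that is more than bookkeeping is checking that the action of horizontal composition $\circ_0 = \circ$ on 2-morphisms is well defined and satisfies the 2-categorical interchange law. Here I would observe that bifunctoriality of the monoidal product $\circ$ on $[\basecat,\basecat]$ is exactly what supplies the horizontal composition of natural transformations (whiskering on each side, glued by naturality), and that the interchange equation is precisely the restatement of this bifunctoriality under the transposition of monoidal and 1-categorical composition into 1- and 2-categorical composition. I therefore expect no serious obstacle: the corollary is a direct specialisation, the coherence conditions transfer for free once this bifunctoriality is invoked, and the converse direction needs no separate argument since it is delivered by the $\mathsf{SMC}(-)$ construction of Proposition~\ref{cor:monoidal-cats-as-single-cats}.
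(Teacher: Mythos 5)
Your proposal is correct and matches the paper exactly: the corollary is proved there by nothing more than specialising Prop.~\ref{cor:monoidal-cats-as-single-cats} (the $1(-)$ construction) to the strict monoidal category of Def.~\ref{prop:endofunctors-are-monoids} and reading off the components, just as you do. Your extra check that interchange follows from bifunctoriality of $\circ$ is a harmless elaboration of what Prop.~\ref{cor:monoidal-cats-as-single-cats} already guarantees.
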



\section{Generalising monads via lax functors to category-graded monads}
\label{sec:monadoids}
Recall that every monoid corresponds to a single-object category
 (Proposition~\ref{prop:monoid-as-singleton-cat}). 
In a single-object category, all morphisms
compose just as all elements of a monoid multiply.  Categories
can therefore be seen as a generalisation of monoids: morphisms $f$ and $g$ compose to
$g \circ f$ only when the source of $g$ agrees with the target of $f$.
Similarly, \emph{groupoids} generalise the notion of groups to
categories, where elements of the group are morphisms and every
morphism has an inverse.

This process of generalising some notion to a category is known
as \emph{horizontal categorification} 
or \emph{oidification} (echoing the relationship of groups to
groupoids) (\cite{nlab:horizontal_categorification,bertozzini2008horizontal,bertozzini2008non}). The general approach is to realise some concept as
a kind of category comprising a single object which can then be
generalised to many objects; categories are the ``oidification'' of
monoids going from a single object to many
(\cite{nlab:horizontal_categorification})---though the term \emph{category} is preferred to
\emph{monoidoid}!

Our approach can be summarised as the horizontal categorification
of monads. It turns out that this yields structures that unify
graded and parameterised monads, but with some subtleties as we shall see.

First we need to understand in what way the concept of a monad can be
seen as single-object entity such that it can be subjected to
oidification. The view of a monad as a monoid in
category of endofunctors (Proposition~\ref{aphorism}) highlights
that a monad comprises a single object in $[\basecat, \basecat]$ but
oidification cannot readily be applied to this perspective.  Instead,
we take B\'{e}nabou's \citeyearpar{benabou1967introduction} view of monads as \emph{lax functors} which is
more amenable to oidification. We first recall the definitions of
lax functors for completeness.

\begin{definition}
\label{def:lax-functor}
  A \emph{lax} functor $\F : \basecat{} \rightarrow \basecatD{}$
  (where $\basecatD{}$ is a 2-category)\footnote{\emph{Lax functors}
  are often defined between 2-categories, but can instead be defined (like here) with
  a 1-category in the domain which is treated as a 2-category with
  only trivial 2-morphisms.}
  comprises
  an object mapping and a morphism mapping, however the usual
  functor axioms $id_{\F A} = \F id_A$ and $\F g \circ \F f = \F
  (g \circ f)$ are replaced by families of 2-morphisms in
  $\basecatD{}$ which ``laxly'' preserve units
  and composition:
\begin{equation*}
 \eta_A : \; id_{\F A} \Rightarrow \F id_A
 \qquad
 \mu_{f,g} : \; \F g \circ \F f \Rightarrow \F (g \circ f)
\end{equation*}
We have chosen the names of these families to be suggestive of our endpoint here.

Whilst the axioms of a category are preserved automatically by
non-lax functors, this is not the case here.
For example, if $\F$ is a functor then
$\F f = \F (id \circ f) = \F id \circ \F f = \F f$, but
not if $\F$ is a lax functor. Instead a lax functor has
additional axioms for associativity of $\mu$ and unitality of
$\eta$:
\begin{equation*}
    \xymatrix@C=2.4em@R=1.75em{
      \F f  \ar[r]^-{\text{\footnotesize{$\eta_{J}\,\F f$}}} \ar[d]_{\text{\footnotesize{$\F f \, \eta_{I}$}}} \ar@{=}[dr]
      &  \F {id_J} \circ \F f \ar[d]^{\text{\footnotesize{$\mu_{\id_J,f}$}}} \\
      \F f \circ \F {\id_I} \ar[r]_{\text{\footnotesize{$\mu_{f,\id_I}$}}} & \F f
    }
    \quad\qquad
    \xymatrix@C=3.2em@R=1.75em{
      \F f \circ \F g \circ \F h  \ar[r]^{\text{\footnotesize{$\F f \, \mu_{g,h}$}}} \ar[d]_{\text{\footnotesize{$\mu_{f,g} \, \F h$}}}
      & \F f \circ \F {(g \circ h)} \ar[d]^{\text{\footnotesize{$\mu_{f,g \circ h}$}}} \\
      \F {(f \circ g)} \circ \F h \ar[r]_{\text{\footnotesize{$\mu_{f \circ g,h}$}}} & \F (f \circ g \circ h)
    }
  \end{equation*}
\end{definition}
\noindent
We can thus see monads as lax functors between the terminal category
$\mathbf{1}$ and the one-object 2-category of endofunctors on
$\mathbb{C}$:
\begin{proposition}[\cite{benabou1967introduction}]
  \label{def:sing-cat}
  For a category $\basecat{}$, a \emph{monad} on $\basecat{}$
  is a lax functor
  $\T : \mathbf{1} \rightarrow \sing{\basecat}$
 where $\mathbf{1}$ is the single-object category with
  $\point \in \objs{\mathbf{1}}$ and a single morphism
  $id_\point : \point \rightarrow \point$. Then $\T \point = \mathbb{C}$
  and $\T \id_\point$ identifies an endofunctor on $\basecat$.
  Laxness means the functor axioms for $\T$ are 2-morphisms, which
  are the unit and multiplication operations of the monad on
  the endofunctor $\T \id_\point$:
\begin{equation*}
\eta : id_{\T \point} \xrightarrow{.} \T \, {\id_\point}
\qquad
\mu : \T \, {\id_\point} \icirc \T \, {\id_\point} \, \xrightarrow{.} \, \T \, {\id_\point}
\end{equation*}
where $id_{\T \point}$ is the identity endofunctor $\Id$ on $\basecat{}$.
The monad axioms are exactly the unit and associativity axioms
of the lax functor.
\end{proposition}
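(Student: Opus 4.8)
The plan is to unfold Definition~\ref{def:lax-functor} in the special case where the domain is the terminal category $\mathbf{1}$ and the codomain is the one-object 2-category $\sing{\basecat}$ of Corollary~\ref{cor:endo-c}, and then to observe that the resulting data and coherence conditions coincide exactly with those of a monad. Since $\mathbf{1}$ has a single object $\point$ and a single (identity) morphism $\id_\point$, the object and morphism mappings carry almost no information, so all of the content is forced into the laxity 2-cells. I would therefore first read off the data, then specialise the two lax-functor axioms, and finally check that the assignment is bijective.

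For the data, the object map is forced to send $\point \mapsto \mathbb{C}$ since $\mathbb{C}$ is the unique object of $\sing{\basecat}$, and the morphism map sends $\id_\point$ to some 1-morphism of $\sing{\basecat}$, i.e.\ an endofunctor $T := \T\,\id_\point$ on $\basecat$; this is the underlying functor of the monad. The lax unit 2-cell $\eta : \id_{\T\point} \Rightarrow \T\,\id_\point$ is a 2-morphism of $\sing{\basecat}$, that is, a natural transformation. By Corollary~\ref{cor:endo-c} the identity 1-morphism $\id_{\T\point}$ on the object $\mathbb{C}$ is the identity endofunctor $\Id$, so $\eta$ is precisely a natural transformation $\Id \xrightarrow{.} T$, the monad unit. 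For the composition 2-cell, the only composable pair of morphisms in $\mathbf{1}$ is $(\id_\point,\id_\point)$, whose composite is again $\id_\point$, so the family $\mu$ collapses to a single member $\mu : \T\,\id_\point \circ \T\,\id_\point \Rightarrow \T\,\id_\point$, i.e.\ a natural transformation $T \circ T \xrightarrow{.} T$, the monad multiplication. Here one must use that horizontal composition in $\sing{\basecat}$ is functor composition, so that the source of $\mu$ is genuinely $T \circ T$.

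For the axioms, I would instantiate the two coherence diagrams of Definition~\ref{def:lax-functor} at the unique morphism $\id_\point$ (so $I = J = \point$ in the unit triangle and $f = g = h = \id_\point$ in the associativity square). The two legs $\eta_J\,\F f$ and $\F f\,\eta_I$ of the unit triangle then become the two whiskerings of $\eta$ with $T$, and the triangle reduces to the two monad unit laws stating that $\mu$ composed with either whiskering of $\eta$ is the identity on $T$. Similarly the associativity square, with all three morphisms equal to $\id_\point$, reduces to the single associativity law for $\mu$. Thus the lax-functor axioms are equivalent to the monad laws, and a lax functor $\T : \mathbf{1} \to \sing{\basecat}$ determines, and is determined by, a triple $(T,\eta,\mu)$ satisfying them.

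Finally I would note that this correspondence is bijective: each lax functor yields a unique monad, each monad gives a lax functor by reversing the readings above, and the two constructions are mutually inverse on the nose. The only real care required---and the step I expect to be the main obstacle---is bookkeeping of the 2-categorical conventions: checking that whiskering a laxity 2-cell by a 1-cell in $\sing{\basecat}$ matches the operations appearing in the standard monad diagrams, and that the collapse of the indexing (every morphism being $\id_\point$) lines the two unit legs and the single associativity square up exactly with the familiar coherence diagrams rather than some permutation of them. This is a careful but routine diagram chase once the identifications of Corollary~\ref{cor:endo-c} are in place.
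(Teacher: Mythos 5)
Your proposal is correct and takes essentially the same approach as the paper: the paper treats this proposition as a direct unfolding of Definition~\ref{def:lax-functor} at the terminal category $\mathbf{1}$ and the one-object 2-category $\sing{\basecat}$ of Corollary~\ref{cor:endo-c}, with the data collapsing to $(\T\,\id_\point, \eta, \mu)$ and the two lax-functor coherence diagrams collapsing to the monad unit and associativity laws. You merely make explicit the whiskering bookkeeping and the bijectivity of the correspondence, which the paper leaves implicit in the statement itself.
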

This proposition recasts the aphorism that \emph{monads are monoids in the
  category of endofunctors}. It equivalently views monads as
lax homomorphisms (lax functors)
between the singleton monoid $\mathbf{1}$ and the
monoid of endofunctors.
Since the source category
$\mathbf{1}$ has but one object and one morphism, $\T$
identifies a particular endofunctor on $\mathbb{C}$
and the lax operators $\eta$ and $\mu$ provide the usual
monad operations for $\T$.

We can now ``oidify'' this
definition in two ways: (1) generalising the singleton source
  $\mathbf{1}$ to a category, or (2) generalising the singleton target
  category $\sing{\basecat}$ to $\cats$.
In this paper, we pursue
the first choice, though we discuss the second in
Section~\ref{subsec:oidification-alt}.
We thus oidify B\'{e}nabou's monad definition by replacing the
singleton category $\mathbf{1}$ with an arbitrary category $\ixcat$.  We might
have chosen the name \emph{monadoid} for this structure, but since
there are two ways in which the oidification can be applied, we settle
on the name \emph{category-graded monads}, the terminology for which
will be explained once we recall {graded monads} in the next
section. We also found that people do not like the name
\emph{monadoid}.

\begin{definition}
  Let $\ixcat$ and $\basecat$ be categories.
  A \emph{category-graded monad}
  is a {lax functor}
  $\T : \ixcat{}^\op \rightarrow \Endo{\basecat}$, and because $\Endo{\basecat}$ has only a single object we require that $\T \, I = \mathbb{C}$ for all $I \in \objs{\mathbb{C}}$.
  Thus the morphism mapping of $\T$ can be thought
of as family of endofunctors indexed by $\ixcat{}$ morphisms, i.e.,
for all $f : I \rightarrow J$ in $\ixcat$ then
  $\T \, f : \basecat \rightarrow \basecat$ is an endofunctor.

We refer to $\ixcat$ as the indexing category
and often write $\T f$ as $\T_f$. For brevity we sometimes refer to
category-graded monads
as \emph{cat-graded monads} or
 \emph{$\ixcat$-graded monads} if
the category $\ixcat$ is in scope.

\label{def:monadoid}
\end{definition}
The definition of category-graded monads is compact, but the requirement
that $T$ is a lax functor comes with a lot of structure
which is akin to that of monads and graded monads.
\begin{corollary}[Category-graded monad operations and axioms]
  \label{corollary:category-graded_monad_laws}
  Suppose $\T : \ixcat{}^\op \rightarrow \Endo{\basecat}$ is a
  category-graded monad as defined above.  Then, following from the lax functor
  definition for $\T$ there are natural transformations
  (which we call \emph{unit} and \emph{multiplication} respectively):
  \begin{equation*}
    \eta_I : id_{\T I} \xrightarrow{.} \T_{id_I}
    \qquad
    \mu_{f, g} : \T_f \icirc \T_g \xrightarrow{.} \T_{(g \circ f)}
    \qquad \text{(\textit{where} $f : I \rightarrow J$ and $g : J \rightarrow K$ are
      in $\ixcat$)}
  \end{equation*}
  Following from the lax functor,
  these satisfy associativity and unitality axioms
  (specialised from Def.~\ref{def:lax-functor}):
  \begin{equation*}
    \xymatrix@C=2.4em@R=1.5em{
      \T_f  \ar[r]^-{\text{\footnotesize{$\eta_{J} \T_f$}}} \ar[d]_-{\text{\footnotesize{$\T_f \, \eta_{I}$}}} \ar@{=}[dr]
      &  \T_{id_J} \icirc \T_f \ar[d]^-{\text{\footnotesize{$\mu_{\id_J,f}$}}} \\
      \T_f \icirc \T_{\id_I} \ar[r]_{\text{\footnotesize{$\mu_{f,\id_I}$}}} & \T_f
    }
    \quad\qquad
    \xymatrix@C=3.2em@R=1.5em{
      \T_f \icirc \T_g \icirc \T_h  \ar[r]^{\text{\footnotesize{$\T_f \mu_{g,h}$}}} \ar[d]_{\text{\footnotesize{$\mu_{f,g}\T_h$}}}
      & \T_f \icirc \T_{(h \circ g)} \ar[d]^{\text{\footnotesize{$\mu_{f,h \circ g}$}}} \\
      \T_{(g \circ f)} \icirc \T_h \ar[r]_{\text{\footnotesize{$\mu_{g \circ f,h}$}}} & \T_{h \circ g \circ f}
    }
  \end{equation*}
  Note that the left square uses the equality $\T_{(f \circ \id_I)} =
  \T_{(\id_J \circ f)} = \T_f$ due to the unitality of $\id$ and the
  right square uses associativity of composition such that $\T_{h
    \circ (g \circ f)} = \T_{(h \circ g) \circ f}$
\end{corollary}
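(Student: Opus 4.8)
The plan is to unfold the lax functor definition (Definition~\ref{def:lax-functor}) applied to the specific lax functor $\T : \ixcat^\op \rightarrow \Endo{\basecat}$ and simply read off what each piece of data and each axiom becomes in this setting. The whole corollary is essentially an instantiation, so the proof is a matter of careful bookkeeping rather than genuine mathematical depth; the only subtlety worth flagging is the contravariance introduced by the $\op$ on the domain, which is what reverses the order of composition in the multiplication.

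First I would recall that a lax functor $\F$ supplies, for each object, a unit 2-morphism $\eta_A : id_{\F A} \Rightarrow \F id_A$ and, for each composable pair of morphisms $f, g$, a multiplication 2-morphism $\mu_{f,g} : \F g \circ \F f \Rightarrow \F(g \circ f)$. Specialising $\F = \T$ and the target to $\Endo{\basecat}$, I use Corollary~\ref{cor:endo-c}: the single object is $\basecat$, the 1-morphisms are endofunctors, the 2-morphisms are natural transformations, horizontal composition $\circ_0$ of 2-morphisms corresponds to functor composition, and vertical composition $\circ_1$ to ordinary composition of natural transformations (written $\xrightarrow{.}$). Hence each $\eta_I$ and each $\mu_{f,g}$ becomes a natural transformation of endofunctors, giving exactly the two displayed operations. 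The key computation is the source/target of $\mu$: for $f : I \rightarrow J$ and $g : J \rightarrow K$ in $\ixcat$, the opposite category $\ixcat^\op$ reverses these to $f : J \rightarrow I$ and $g : K \rightarrow J$, so their composite \emph{in $\ixcat^\op$} is $f \circ g$ (defined because targets and sources match in $\ixcat^\op$), which as a morphism of $\ixcat$ is $g \circ f : I \rightarrow K$. Feeding $f,g$ in this order into the generic $\mu_{f,g} : \F g \circ \F f \Rightarrow \F(f \circ_{\ixcat^\op} g)$ yields precisely $\mu_{f,g} : \T_f \icirc \T_g \xrightarrow{.} \T_{(g \circ f)}$ as stated.

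For the axioms I would instantiate the two coherence diagrams of Definition~\ref{def:lax-functor} at $\T$ and rewrite every $\F$ as $\T$, every $\circ$ of 2-morphisms as $\icirc$, and relabel the indices through the $\op$ reversal exactly as above. The unit triangle becomes the left square, where I note, as the statement already flags, that $\T_{(f \circ \id_I)} = \T_{(\id_J \circ f)} = \T_f$ follows from the identity laws in $\ixcat$ (these are genuine equalities of morphisms, hence equalities of the indexed endofunctors, so no coherence isomorphism is needed). The pentagon-style associativity diagram becomes the right square, using that $\T_{h \circ (g \circ f)} = \T_{(h \circ g) \circ f}$ since composition in $\ixcat$ is associative; again this is an on-the-nose equality of indices, so the apex of the square is unambiguous.

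The main obstacle, such as it is, will be keeping the direction of arrows and the order of composition straight under the combination of three conventions acting at once: the $\op$ on the domain category, the transpose between horizontal/vertical composition supplied by Corollary~\ref{cor:endo-c}, and the lax (rather than oplax) orientation of $\eta$ and $\mu$. I would therefore present the derivation by tabulating, side by side, the generic lax-functor data and its specialisation, and verify in one line that the composite index in each occurrence of $\mu$ lands on $g \circ f$ (resp.\ $h \circ g \circ f$) as read in $\ixcat$. Once that index-tracking is pinned down, every remaining step is a mechanical substitution into the diagrams of Definition~\ref{def:lax-functor}, and the corollary follows.
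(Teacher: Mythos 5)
Your proposal is correct and follows essentially the same route as the paper, which treats this corollary as a direct instantiation of Definition~\ref{def:lax-functor} at $\T : \ixcat^\op \rightarrow \Endo{\basecat}$ (reading 2-morphisms in $\Endo{\basecat}$ as natural transformations via Corollary~\ref{cor:endo-c}, with the $\op$ reversal producing the $g \circ f$ index and strict associativity/unitality of composition in $\ixcat$ justifying the on-the-nose equalities of indices). Your explicit tracking of the contravariance is exactly the bookkeeping the paper leaves implicit, so there is nothing to add.
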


\begin{example}
  Following Prop.~\ref{def:sing-cat} and the above definition,
  every monad ($\M : \mathbb{C} \rightarrow \mathbb{C}, \mu, \eta)$
is a category-graded monad on $\mathbb{C}$, with
indexing $\ixcat = \mathbf{1}$,
lax functor $\T_{id_\point} = \M$, and
operations $\mu_{id_{\point},id_{\point}} = \mu$
 and $\eta_{\point} = \eta$.
\label{prop:simple-monad-to-smonad}
\end{example}

\begin{remark}
For morphisms $f : I \rightarrow J, g : J \rightarrow
K$ in $\ixcat$, any two morphisms $j : A \rightarrow \T_f\, B$ and $k : B \rightarrow
\T_g\, C$ in $\basecat$ can then be composed using multiplication:
\begin{align}
      {A\ \xrightarrow{j}\ \T_f\, B\ \xrightarrow{\T_f\, k}\ \T_f \T_g\, C\
        \xrightarrow{\mu_{f,g,C}}\ \T_{g \circ f}\, C}
\label{eq:monadoid-comp}
\end{align}
\noindent
This composition is akin to Kleisli composition of a monad and has
 identities given by $\eta_{I,A}$. This
 shows the role of the opposite category in $\T : \ixcat{}^\op
\rightarrow \sing{\basecat}$: the outer functor in the source
of $\mu$ ($\T_f$ of $\T_f \T_g$) corresponds to the first effect and the inner
($\T_g$ of $\T_f \T_g$) corresponds to the second effect.
Sequential composition via $\mu_{f,g,C}$ then returns an object
 $\T_{g \circ f}\, C$.
\label{rem:monadoid-comp}
\end{remark}

\iflong
\begin{remark}
  The above composition for morphisms of the form
  $A \rightarrow \T_f\, B$ corresponds
  to the \emph{Grothendieck construction} for an indexed category
  ${\T '} : \ixcat^\op \rightarrow
  \mathbf{Cat}$~(\cite{grothendieck1961categories}) which maps
  an $\ixcat^\op$-indexed category $\T '$ to a category $\ixcat{}\!\int{}\!\T '$
  which provides a category with:
  \begin{itemize}
  \item objects
    $\objs{(\ixcat{}\!\int{}\!\T ')} = \objs{\ixcat} \times
    \objs{\mathbb{C}}$ i.e., pairs of index and base category objects;

   \item morphisms $g : I \times A \rightarrow J \times B$ given
     by $g = f \times h$ where
     $f : I \rightarrow J \in \morphs{\ixcat}$ and
     $h : A \rightarrow \T '_f \, B \in \morphs{\basecat}$;

     \item composition and identities defined as in
       Remark~\ref{rem:monadoid-comp} in terms of the lax functor
       operations $\mu$ and $\eta$.
  \end{itemize}
As pointed out by~\citet{jacobs1999categorical},
$\mu$ and $\eta$ need not be natural isomorphisms
but just natural transformations in order for the above construction
to be a category~\cite[p.117, 1.10.7]{jacobs1999categorical}; that is,
$\T '$ need only be a lax functor for the Grothendieck construction to
work, as is the case for category-graded monads here. 
\label{rem:grothendieck}
\end{remark}
\fi

\begin{example}
The \emph{identity category-graded monad} is a lifting of the identity monad
to an arbitrary indexing category. We denote this $\textbf{Id} :
\ixcat^\op \rightarrow \Endo{\basecat}$ with $\textbf{Id}_f =
\Id_\basecat$. This will be useful later.
\label{exm:identity-monadoid}
\end{example}

\begin{example}
  \label{exm:static-trace}
A category can be viewed as a state machine: objects
as states and morphisms as transitions. Given a monad $\M$
on $\basecat$
and some category $\ixcat$,
there is a category-graded monad $\T : \ixcat^\op \rightarrow \Endo{\basecat}$
with $\T_f = \M$, restricting the
composition of effectful computations by the
morphisms of $\ixcat$. Any effectful operation producing values $\M A$ can be
suitably wrapped into $\T_f \, A$ for some particular $f$ describing the operation
and its corresponding state transition. A cat-graded monad model for a program then
provides a \emph{static trace} of the effects in its indices.
We may also wrap a monad into a category-graded monad
but with some additional implementation
related to the grading, with the implementation dependent on the indices.

For example, we could capture the simple state-machine
protocol of a mutual exclusion
lock over a memory cell with $\objs{\ixcat} = \{\textsf{free},
\textsf{critical}\}$
and morphisms $\textsf{lock} : \textsf{free} \rightarrow
\textsf{critical}$ and $\textsf{unlock} : \textsf{critical}
\rightarrow \textsf{free}$ and $\textsf{get}, \textsf{put} :
\textsf{critical} \rightarrow \textsf{critical}$. We can then
wrap the state monad, call it $\mathsf{St}^S$
for some state type $S$, to get an $\ixcat$-graded monad
$\mathsf{ConcSt}^S$, wrapping the usual state monad operations
as $\emph{get} : \mathsf{ConcSt}^S_{\textsf{get}}\ S$ and
  $\emph{put} : S \rightarrow \mathsf{ConcSt}^S_{\textsf{put}}\ 1$
  and adding operations (implemented in $\mathsf{ConcSt}$)
  $\emph{lock} : \mathsf{ConcSt}^S_{\textsf{lock}}\ 1$ and
  $\emph{unlock} : \mathsf{ConcSt}^S_{\textsf{unlock}}\ 1$
  and an operation for spawning threads
  from computations whose index is any morphism from
  $\textsf{free}$ to $\textsf{free}$, i.e.
  $\emph{spawn} : (\forall f . \mathsf{ConcSt}^S_{f : \mathsf{free}
    \rightarrow \mathsf{free}} 1) \rightarrow \mathsf{ConcSt}^S_{id_{\mathsf{free}}}
  1$. Subsequently,
  the category grading statically ensures the mutual exclusion protocol:
  a spawned thread must acquire the \emph{lock} before it can \emph{get} or
  \emph{put} (or indeed \emph{unlock}), and the lock must be
  released if acquired.
  In Haskell-style syntax, we can then have programs like:
  $\textbf{do} \{\textit{lock};\ x \leftarrow \textit{get};\
  \textit{put} (x + 1);\ \textit{unlock}\} :
  \mathsf{ConcSt}^{\textsf{Int}}_{\mathsf{unlock} \circ \mathsf{put}
    \circ \mathsf{get} \circ \mathsf{lock} : \mathsf{free} \rightarrow \mathsf{free}}\ 1$.
\end{example}

\section{Graded monads as category-graded monads}
\label{sec:graded-monads}
Graded monads are a generalisation of monads from a single endofunctor
to an indexed family of endofunctors whose indices are drawn from a
(possibly ordered) monoid. The graded monad operations are
``mediated'' by the monoidal structure of the indices.
Graded monads appear in the literature under various names:
\emph{indexed monads}~(\cite{DBLP:journals/corr/OrchardPM14}), \emph{parametric
  effect monads}~(\cite{katsumata2014parametric}) or \emph{parametric
  monads}~(\cite{mellies2012parametric}).  The terminology of
\emph{graded monads} (e.g., in
\cite{smirnov2008graded,fujii2016towards,miliusgeneric,gaboardi2016})
is now the preferred term in the community.

Here we show
that lax functors $\T : \ixcat{}^\op \rightarrow \sing{\basecat}$
generalise the notion of graded monads. We thus explain
why we chose the name \emph{category-graded monads}, since
this structure can be seen as generalising a monoid-based
grading to a category-based grading.
The idea of generalising
graded monads to lax functors is mentioned by
\cite[\S6.2]{katsumata2014parametric} and \cite{fujii2016towards}. We give
the full details.

\begin{definition}
  Let $(\mset, \mmult, \munit, \sqsubseteq)$ be a
  pomonoid, meaning that $\mmult$ is also monotonic
  with respect to $\sqsubseteq$, presented as a strict monoidal
category (Prop.~\ref{prop:pomonoid-as-strict-monoidal-categories})
  which we denote as $\mset$ for convenience.

 A \emph{graded monad} comprises a functor
  $\Gm : \mset \rightarrow [\basecat,\basecat]$ which is \emph{lax
    monoidal} and therefore has
  natural transformations witnessing lax preservation of the
  monoidal structure of $\mset$ into the monoidal structure of
  $[\basecat,\basecat]$, that is
\emph{unit} $\eta^{\Gm} : \Id \xrightarrow{.} \Gm \, \munit$
and \emph{multiplication}
$\mu^{\Gm}_{m,n} : \Gm \, m \, \icirc \Gm \, n \, \xrightarrow{.}
\Gm \, (m \mmult n)$. The morphism mapping of the functor $\Gm$ means
that an ordering $m \sqsubseteq n$ corresponding to a
morphism $f : m \rightarrow n \in \morphs{\mset}$ is then mapped to a
natural transformation $\Gm f : \Gm \, m \xrightarrow{.} \Gm \, n$\
which we call an \emph{effect approximation}.
Lax monoidality of $\Gm$ means that functoriality of
$\bullet$ is laxly preserved by $\mu^{\Gm}$. A graded monad thus
satisfies axioms:
\begin{align*}
\xymatrix@C=2em@R=1.75em{
 \Gm \, m \ar[r]^-{\text{\footnotesize{$\Gm \eta$}}} \ar[d]_{\text{\footnotesize{$\eta \Gm$}}} \ar@{=}[dr] &
 \Gm \, m \icirc \Gm \, \munit \ar[d]^{\text{\footnotesize{$\mu_{m, \munit}$}}} \\
 \Gm \, \munit \icirc \Gm \, m \ar[r]_{\text{\footnotesize{$\mu_{\munit, m}$}}} &  \Gm \, m
}
\quad
\xymatrix@R=1.75em@C=2.75em{
 \Gm \, m \icirc \Gm \, n \icirc \Gm \, p \ar[r]^-{\text{\footnotesize{$\Gm \mu_{n,p}$}}} \ar[d]_{\text{\footnotesize{$\mu_{m,n} \Gm$}}}
&  \Gm \, m \icirc \Gm \, (n \mmult p) \ar[d]^{\text{\footnotesize{$\mu_{m, n \mmult p}$}}} \\
\Gm \, (m \mmult n) \icirc \Gm \, p \ar[r]_{\text{\footnotesize{$\mu_{m \mmult n, p}$}}} &
\Gm \, (m \mmult n \mmult p)
}
\quad
\xymatrix@R=1.75em@C=2em{
\Gm n \icirc \Gm m  \ar[d]_-{\text{\footnotesize{$\mu^{\Gm}_{n,m}$}}}
\ar[r]^-{\text{\footnotesize{$\Gm f \Gm$}}} &
\Gm n' \icirc \Gm m
\ar[r]^-{\text{\footnotesize{$\Gm \Gm g$}}} &
\Gm n' \icirc \Gm m' \ar[d]^-{\text{\footnotesize{$\mu^{\Gm}_{n',m'}$}}}  \\
\Gm (n \bullet m)
\ar[rr]_{\text{\footnotesize{$\Gm (f \bullet g)$}}} & & \Gm (n' \bullet m')
}
\end{align*}
where $f : n \rightarrow n', g : m \rightarrow m' \in \mset_1$
in the rightmost diagram.

Note, graded monads are not families of monads;
$\Gm \, m$ need not be a monad for all $m$.
For example:
\label{def:graded-monad}
\end{definition}

\begin{example}
The \emph{graded list monad} is indexed by the
 monoid $(\mathbb{N}, \ast, 1, \leq)$ refining the usual list monad by
length $\Gm_ n \, A = A^0 + A^1 + \ldots + A^n$ thus
$\Gm_n A$ represents the type of lists of length
at most $n$ with elements of type $A$. This graded monad then has the operations
$\eta^{\Gm}_A : A \rightarrow \Gm_1 A = \mathsf{in}_1$
injecting a value into a singleton list and
$\mu^{\Gm}_{n, m,A} : \Gm_m (\Gm_n A)
\rightarrow \Gm_{m \ast n} A = \textit{concat}$
which concatenates together a list of lists
and
approximation $\Gm (f : m \leq n) : \Gm_m A \rightarrow \Gm_n A$
which views a list of at most length $m$ as a list
of at most length $n$ when $m \leq n$.
\end{example}

In the literature, graded monads need not have a pre-ordering
(e.g.,~\cite{mycroft2016effect,gibbons2016comprehending}), so we may
distinguish two kinds of graded monad depending on the ``grading''
structure: \emph{monoid-graded monads} and \emph{pomonoid-graded
  monads}. A monoid-graded monad
$\Gm : A \rightarrow [\basecat, \basecat]$ therefore maps from a
\emph{discrete} strict monoidal category, i.e., one whose only morphisms are the
identities. Generally when we refer to ``graded monads'' we
mean pomonoid-graded monads since this is the most common meaning in
the literature. We qualify the
nature of the grading structure when we need to be explicit.

We now come to the first main result: that monoid-graded monads are a
simple case of category-graded monads via the old idea of monoids as
single-object categories (Prop.~\ref{prop:monoid-as-singleton-cat})
and monoidal categories (Prop.~\ref{prop:monoid-as-discrete-monoid-cat}):

\begin{proposition}
  Every monoid-graded monad is a category-graded monad.
  \label{prop:gmonad-to-monadoid}
\end{proposition}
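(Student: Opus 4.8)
The plan is to realise the grading monoid $\mset$ as a one-object category and hand it to the definition of a category-graded monad as the indexing category, so that the lax-monoidal functor $\Gm$ becomes, almost verbatim, a lax functor. Concretely, I would present $\mset$ as the one-object category $\monvert{\mset}$ (Prop.~\ref{prop:monoid-as-singleton-cat}), whose single object is $\point$, whose morphisms $\point \to \point$ are the elements of $\mset$, whose composition is $\mmult$, and whose identity is $\munit$. Because the definition of a category-graded monad (Def.~\ref{def:monadoid}) already builds in an opposite, I would take the indexing category to be $\ixcat = \monvert{\mset}^\op$, so that $\ixcat^\op = \monvert{\mset}$. The required lax functor $\T : \ixcat^\op \to \Endo{\basecat}$ is then forced on objects ($\T\,\point = \basecat$, since $\Endo{\basecat}$ has the single object $\basecat$) and is defined on morphisms by $\T_m = \Gm\,m$ for each monoid element $m$; that is, the morphism map of $\T$ is literally the object map of $\Gm$.

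Next I would supply the lax structure, taking the unit and multiplication $2$-cells of $\T$ to be exactly those of the graded monad: $\eta_\point = \eta^{\Gm} : \Id \xrightarrow{.} \Gm\,\munit = \T_{\id_\point}$ (using that $\id_\point = \munit$ in $\monvert{\mset}$), and $\mu_{m,n} = \mu^{\Gm}_{m,n} : \Gm\,m \circ \Gm\,n \xrightarrow{.} \Gm\,(m \mmult n)$. The one point that needs checking is that the target $\Gm\,(m \mmult n)$ really is the $\T_{g \circ f}$ demanded by Corollary~\ref{corollary:category-graded_monad_laws}: for $f,g$ identified with $m,n$, the composite $g \circ f$ is taken in $\ixcat = \monvert{\mset}^\op$, where composition is the reversed monoid product, so $g \circ f = m \mmult n$ and the (co)domains line up. This is precisely the role of the opposite in the definition, and the only place where variance matters.

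Finally I would verify the axioms and injectivity. Since $\mset$ is a monoid and not a pomonoid, the monoidal category $\mset$ is discrete, so $\monvert{\mset}$ carries only identity $2$-morphisms; hence the effect-approximation diagram of Def.~\ref{def:graded-monad} is vacuous and nothing is lost in the translation. The remaining two graded-monad diagrams — unitality and associativity of $\mu^{\Gm}$ — are exactly the unitality and associativity squares a lax functor must satisfy (Corollary~\ref{corollary:category-graded_monad_laws}), so they transfer without modification. The assignment $\Gm \mapsto \T$ retains all data ($\T_m = \Gm\,m$, $\eta = \eta^{\Gm}$, $\mu = \mu^{\Gm}$), and $\Gm$ is recovered from $\T$ by reading off its morphism map and lax cells, so the map is injective, as required by the summary diagram.

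Conceptually this is just the delooping correspondence of Prop.~\ref{cor:monoidal-cats-as-single-cats} and Cor.~\ref{cor:endo-c} applied at the level of functors: a lax monoidal functor out of a strict monoidal category is the same as a lax functor out of the corresponding one-object $2$-category, and when the domain is discrete that $2$-category collapses to the one-object $1$-category $\monvert{\mset}$. The only genuine subtlety — and the step I would be most careful about — is the handling of the opposite category, since getting it wrong would silently force the grading monoid to be commutative; writing the multiplication as $\T_f \circ \T_g \xrightarrow{.} \T_{g \circ f}$ and tracking that $g \circ f$ lives in $\ixcat^\op = \monvert{\mset}$ is exactly what makes the non-commutative case go through.
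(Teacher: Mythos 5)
Your proposal is correct and takes essentially the same route as the paper: the paper's proof also deloops the discrete monoidal category $(\mset, \mmult, \munit)$ into a one-object category (written $1(\mset)$, via Prop.~\ref{cor:monoidal-cats-as-single-cats}) and transfers the data verbatim, setting $\T f = \Gm f$, $\eta = \eta^{\Gm}$, $\mu_{f,g} = \mu^{\Gm}_{f,g}$, with the lax functor axioms following directly from the graded monad axioms. Your explicit treatment of the opposite category (so that $\mu_{f,g} : \T_f \circ \T_g \xrightarrow{.} \T_{g\circ f}$ lands on $\Gm(m \mmult n)$ for non-commutative $\mset$) and of injectivity only spells out what the paper leaves implicit.
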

\begin{proof}
  The indexing category for a monoid-graded monad on $\mathbb{C}$ is discrete
  monoidal $(\mset, \mmult, \munit)$, thus it is equivalent to a
  single-object category (Prop.~\ref{cor:monoidal-cats-as-single-cats}),
  which we write as ${1}(\mset)$.
  Similarly, $[\basecat, \basecat]$ is
  equivalent to the single object 2-category $\sing{\basecat}$
  (Cor.~\ref{cor:endo-c}).  Thus
  $\Gm : \mset \rightarrow [\basecat, \basecat]$ is equivalent to a
  lax functor $\T : {1}(\mset) \rightarrow \sing{\basecat}$,
  with $\T f = \Gm f$ (where $f$ is a morphism of ${1}(\mset)$
  and an object of $\mset$) and $\eta_{1} \, = \eta^{\Gm}$ and
  $\mu_{f,g} \, = \mu^{\Gm}_{f,g}$ whose lax functor axioms follow
  directly from the graded monad axioms.
\end{proof}

\paragraph{On the naming}
The general paradigm of ``grading'' is to have an indexed structure
where the structure of the indices matches the structure of some
underlying semantics or term calculus (\cite{10.1145/3341714}).
We can grade by different structures, mapping (e.g.,
as a homomorphism or functor) to the underlying domain (in our case
$\sing{\basecat}$). Thus, we can view this particular class of
lax functors as a horizontal categorification of monads
which serves to ``grade'' by a category, rather than grading by a single object
as with monads or grading by a (po)monoid as with traditional graded monads.

\subsection{Pomonoid-graded monads and 2-category-graded monads}
\label{sec:pomonoid-graded}

Proposition~\ref{prop:gmonad-to-monadoid} considers only monoid-graded
monads (without an ordering) however in the literature
\emph{pomonoid}-graded monads are the norm. Since strict monoidal
categories are equivalent to one-object 2-categories
(Prop.~\ref{cor:monoidal-cats-as-single-cats}),
where the
morphisms of the former become the 2-morphisms of the latter, we
therefore generalise category-graded monads to 2-category-graded
monads to complete the picture.

\begin{definition}
\label{def:2-monadoid}
A \emph{2-category-graded monad} extends a category-graded monad to a
2-categorical index category, thus
$\T : \ixcat{}^\op \rightarrow \sing{\basecat}$ is a lax
2-functor. This provides a 2-morphism mapping from $\twomorphs{\ixcat}$ to natural
transformations on $\basecat$: for all
$f, g : A \rightarrow B \in \morphs{\ixcat}$ and
$\mathbf{f} : f \Rightarrow g \in \twomorphs{\ixcat}$ then
$\T_{\mathbf{f}} : \T_f \xrightarrow{.} \T_g$ is a natural
transformation which we call an \emph{approximation} to recall
the similar idea in graded monads.

The 2-morphism mapping has 2-functorial
  axioms in two flavours: for vertical composition (left two
diagrams) and for horizontal composition which is laxly
preserved by $\T$ (right two diagrams):
\newcommand{\kK}{\mathbf{k}}
\newcommand{\gG}{\mathbf{g}}
\newcommand{\fF}{\mathbf{f}}
\newcommand{\hH}{\mathbf{h}}
\begin{align*}
\xymatrix@C=1.75em@R=1.25em{
\T_f \ar[d]_{\text{\footnotesize{$\T_{\mathbf{id}_f}$}}}
\ar[dr]^{\text{\footnotesize{$\mathbf{id}_{\T f}$}}} \\
\T_f \ar@{=}[r] & \T_f
}
\qquad
\xymatrix@C=1.75em@R=1.25em{
\T_f \ar[d]_{\text{\footnotesize{$\T_{\mathbf{f}}$}}} \ar[dr]^{\text{\footnotesize{$\T_{\mathbf{g}\, \circ_1\, \mathbf{f}}$}}} &  \\
\T_g \ar[r]_{\text{\footnotesize{$\T_{\mathbf{g}}$}}} & \T_h
}
\qquad
\xymatrix@C=1.75em@R=1.25em{
\Id_{\mathbb{C}} \ar[dr]^{\text{\footnotesize{$\eta_I$}}} \ar[d]_{\text{\footnotesize{$\eta_I$}}} \\
\T_{id_I} \ar[r]_{\text{\footnotesize{$\T_{\mathbf{id}_{id_I}}$}}} & \T_{id_I}
}
\qquad
\xymatrix@C=1.75em@R=1.25em{
\T_f \T_g \ar[d]_{\text{\footnotesize{$\mu_{f,g}$}}}
\ar[r]^-{\text{\footnotesize{$\T_{\mathbf{f}} \T_{\mathbf{g}}$}}} &
\T_{f'} \T_{g'} \ar[d]^{\text{\footnotesize{$\mu_{f',g'}$}}} \\
\T_{g \circ f} \ar[r]_-{\text{\footnotesize{$\T_{\mathbf{g}\, \circ_0\, \mathbf{f}}$}}}
& \T_{g' \circ f'}
}
\end{align*}
The left two diagrams hold
for all $f, g, h : A \rightarrow B \in \ixcat_1$
and $\mathbf{f} : f \Rightarrow g \in \ixcat_2$ and
$\mathbf{g} : g \Rightarrow h \in \ixcat_2$.
The right two diagrams hold for all
$f, f' : A \rightarrow B \in \ixcat_1$ and $g, g' : B \rightarrow C \in \ixcat_1$
and $\mathbf{f} : f \Rightarrow f' \in \ixcat_2$ and
$\mathbf{g} : g \Rightarrow g' \in
\ixcat_2$.
\end{definition}

\begin{proposition}
  Every pomonoid-graded monad is a 2-category-graded monad.
  \label{prop:gmonad-to-2-monadoid}
\end{proposition}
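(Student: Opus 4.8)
The plan is to mirror the proof of Proposition~\ref{prop:gmonad-to-monadoid}, but replacing the discrete presentation of the index monoid with its 2-categorical one, so that the order-approximation maps are carried across as genuine 2-cells. First I would present the pomonoid $(\mset, \mmult, \munit, \sqsubseteq)$ as a strict monoidal category (Prop.~\ref{prop:pomonoid-as-strict-monoidal-categories}) and then, via Prop.~\ref{cor:monoidal-cats-as-single-cats}, as the one-object 2-category $1(\mset)$: its unique object is $\point$, its 1-morphisms are the elements of $\mset$ with horizontal composition $\circ_0 = \mmult$ and 1-identity $\munit$, and its 2-morphisms are the ordering witnesses $m \sqsubseteq n$, with vertical composition $\circ_1$ given by transitivity of the preorder. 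On the codomain side, $[\basecat,\basecat]$ is equivalent to the one-object 2-category $\sing{\basecat}$ (Cor.~\ref{cor:endo-c}), whose 1-morphisms are endofunctors and whose 2-morphisms are natural transformations.

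Next I would transpose the graded-monad data across these two equivalences to define a lax 2-functor $\T : 1(\mset)^\op \rightarrow \sing{\basecat}$, which is precisely a 2-category-graded monad by Def.~\ref{def:2-monadoid}. Concretely: on 1-morphisms set $\T_m = \Gm\, m$; reuse the lax-monoidal unit $\eta^{\Gm}$ and multiplication $\mu^{\Gm}$ as the lax-functor 2-cells $\eta$ and $\mu$, exactly as in the monoid-graded case of Prop.~\ref{prop:gmonad-to-monadoid}; and on 2-morphisms set $\T_{\mathbf{f}} = \Gm f$, so that the effect-approximation natural transformation $\Gm(m \sqsubseteq n)$ becomes the approximation 2-morphism $\T_{\mathbf{f}} : \T_m \xrightarrow{.} \T_n$ required by Def.~\ref{def:2-monadoid}.

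It then remains to check the lax 2-functor axioms. The two vertical-composition axioms (the left diagrams of Def.~\ref{def:2-monadoid}) are immediate from the functoriality of $\Gm$ on the underlying preorder: $\Gm$ preserves identities, sending $\mathbf{id}_m$ to $\mathbf{id}_{\T m}$, and preserves composition of ordering witnesses. The unit-compatibility axiom (third diagram) is likewise immediate, since $\Gm$ carries the trivial approximation $\mathbf{id}_{\munit}$ to an identity natural transformation. The one substantive verification is the horizontal lax-preservation axiom (rightmost diagram of Def.~\ref{def:2-monadoid}), asking that $\mu$ be compatible with approximation under $\circ_0$: this is exactly the rightmost graded-monad axiom of Def.~\ref{def:graded-monad}, namely the naturality square for $\mu^{\Gm}$ with respect to the maps $\Gm f$ and $\Gm g$ along $f \bullet g$, read across the equivalence.

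The main obstacle is bookkeeping rather than mathematics. One must track that the $(-)^\op$ on the index 2-category reverses only the 1-morphism (horizontal) direction and leaves the orientation of the approximation 2-morphisms unchanged, so that $\T_{\mathbf{f}}$ runs from $\T_m$ to $\T_n$ exactly when $m \sqsubseteq n$, and that the horizontal composite $\mathbf{g} \circ_0 \mathbf{f}$ in $1(\mset)^\op$ matches $f \bullet g$ in the order the monoidal product dictates. Once the orientation and naming of each component is pinned down, every 2-category-graded monad axiom reduces termwise to a graded-monad axiom or to functoriality of $\Gm$ already assumed in Def.~\ref{def:graded-monad}; injectivity of the assignment is then clear, since $\T$ is defined so as to recover $\Gm$ on the nose.
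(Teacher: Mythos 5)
Your proposal is correct and takes essentially the same approach as the paper's proof: present the pomonoid as the one-object 2-category $1(\mset)$ via Prop.~\ref{cor:monoidal-cats-as-single-cats}, transpose the graded-monad data directly ($\T f = \Gm f$ on 1-morphisms, $\T \mathbf{f} = \Gm \mathbf{f}$ on 2-morphisms, with $\eta = \eta^{\Gm}$ and $\mu = \mu^{\Gm}$), and note that the lax 2-functor axioms follow from the graded-monad axioms. You merely spell out the axiom-by-axiom correspondence and the $(-)^{\op}$ bookkeeping that the paper's one-line verification leaves implicit.
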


\begin{proof}
  Strict monoidal categories are equivalent to single-object
2-categories (Prop.~\ref{cor:monoidal-cats-as-single-cats}). Thus
  $\Gm : \mset{} \rightarrow [\basecat, \basecat]$ is equivalent to a
  lax functor $\T : {1}(\mset) \rightarrow \sing{\basecat}$,
  with $\T f = \Gm f$ (where $f$ is a morphism of ${1}(\mset)$
  and an object of $\mset$) and $\T \mathbf{f} = \Gm \mathbf{f}$ for approximations
  (where $\mathbf{f}$ is a 2-morphism of ${1}(\mset)$ and a
  morphism of $\mset$) and $\eta_{1} \, = \eta^{\Gm}$ and $\mu_{f,g}
  \, = \mu^{\Gm}_{f,g}$ whose lax functor axioms follow from the graded monad
  axioms.
\end{proof}

\section{Lax natural transformations and category-graded monad morphisms}
\label{sec:lax-nats}
Our use of lax functors provides a source of
useful results from the literature. Here we show the notion of
\emph{lax natural transformations}~(\cite{street1972two}) which provides
category-graded monad (homo)morphisms and further useful additional structure
leveraged in Section~\ref{sec:loose}.

\begin{definition}(\cite{street1972two})
Let $\ixcat$ be a category
and $\TS, \T : \ixcat \rightarrow \cats$ be two lax functors.
A \emph{left lax natural transformation} $L : \TS \xrightarrow{.} \T$
comprises (1) a functor $L_I : \TS I \rightarrow
\T I$ for every $I \in \objs{\ixcat}$ and (2)
a natural transformation $L_f : \T_f L_I \xrightarrow{.} L_J \TS_f$
for every $f : I \rightarrow J \in \morphs{\ixcat}$,
such that the following diagrams commute (where $f : I \rightarrow J$
and $g : J \rightarrow K$):
\begin{equation*}
\xymatrix@R=0.12em@C=3em{
\T_g \T_f L_I
 \ar[r]^-{\text{\footnotesize{$\mu^{\T}_{g,f} L_I$}}} \ar[dd]_-{\text{\footnotesize{$\T_g L_f$}}}
& \T_{g \circ f} L_I \ar[dr]^-{\text{\footnotesize{$ L_{g \circ f}$}}}   \\
& & L_K \TS_{g \circ f} \\
\T_g L_J \TS_{f} \ar[r]_-{\text{\footnotesize{$ L_g \TS_{f}$}}}
& L_K \TS_{g} \TS_{f} \ar[ur]_-{\text{\footnotesize{$ L_K \mu^{\TS}_{g,f}$}}}
}
\qquad\;\;
\xymatrix@R=1.75em@C=3em{
L_I \ar[r]^-{\text{\footnotesize{$\eta^{\T}_I L_I$}}} \ar[dr]_-{\text{\footnotesize{$ L_I \eta^{\TS}_{I}$}}}
 & \T_{id_I} L_I \ar[d]^-{\text{\footnotesize{$ L_{id_I}$}}} \\
& L_I \TS_{id_{I}}
}
\end{equation*}
A \emph{right lax natural transformation} $R : \TS \xrightarrow{.} \T$
has the same data, with functors $R_I : \TS I \rightarrow \T I$
but a family of natural transformations $R_f : R_J \TS_f \xrightarrow{.}
\T_f R_I$. Subsequently, the dual of the above diagrams commute.
\label{def:left-lax-nat-trans}
\end{definition}

\begin{proposition}
Let $\ixcat$ and $\basecat$ be categories and
$\TS, \T : \ixcat^\op \rightarrow \Endo{\basecat}$
be $\mathbb{I}$-graded monads.

A \emph{homomorphism} $\gamma : \T \rightarrow \TS$
between the $\mathbb{I}$-graded monads is a left-lax natural
transformation $L : \TS \xrightarrow{.} \T$ (since
$\TS$ and $\T$ map into a sub-category of
$\mathbf{Cat}$ with $\TS I = \T I = \basecat$) with
$L_I = \Id_{\basecat}$. Therefore we have a family of
natural transformations $\gamma_f = L_f : \T_f L_I \xrightarrow{.} L_J
\TS_f$ which are natural transformations $\T_f \xrightarrow{.}
\TS_f$ since $L_I$ and $L_J$ are the identity functor,
with the following homomorphism axioms following from
the definition of lax naturality (eliding again
$L_I$ and $L_J$ which are identities):
\begin{align*}
\xymatrix@R=0.1em@C=3em{
\T_f \circ \T_g
 \ar[r]^-{\text{\footnotesize{$\mu^{\T}_{f,g}$}}} \ar[dd]_-{\text{\footnotesize{$\T_f \gamma_g$}}}
& \T_{g \circ f} \ar[dr]^-{\text{\footnotesize{$\gamma_{g \circ f}$}}} \\
& & \TS_{g \circ f} \\
\T_f \circ \TS_{g} \ar[r]_-{\text{\footnotesize{$\gamma_f \, \TS_{g}$}}}
& \TS_{f} \TS_{g} \ar[ur]_-{\text{\footnotesize{$\mu^{\TS}_{f,g}$}}}
}
\qquad
\xymatrix@R=1.75em@C=3em{
\Id \ar[r]^-{\text{\footnotesize{$\eta^{\T}_I$}}} \ar[dr]_-{\text{\footnotesize{$\eta^{\TS}_{I}$}}}
 & \T_{\id_I} \ar[d]^-{\text{\footnotesize{$\gamma_{\id_I}$}}} \\
& \TS_{id_{I}}
}
\end{align*}
%
\label{def:monadoid-homomorphism}
Thus we can define the category of $\ixcat$-graded monads over
$\mathbb{C}$ base with these morphisms.
\end{proposition}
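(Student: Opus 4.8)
The plan is to read the final statement as an instantiation of Definition~\ref{def:left-lax-nat-trans} in the degenerate case where the two lax functors land in the single-object 2-category $\Endo{\basecat}$, followed by a routine check that the resulting morphisms assemble into a category. First I would apply Definition~\ref{def:left-lax-nat-trans} with the source category taken to be $\ixcat^\op$ and with $\TS, \T : \ixcat^\op \rightarrow \Endo{\basecat}$. Since $\Endo{\basecat}$ has the single object $\basecat$, each object component $L_I : \TS I \rightarrow \T I$ is forced to be an endofunctor $\basecat \rightarrow \basecat$; stipulating $L_I = \Id_{\basecat}$ collapses all of these. The morphism components then read $L_f : \T_f L_I \xrightarrow{.} L_J \TS_f$, and substituting $L_I = L_J = \Id_{\basecat}$ leaves precisely a family of natural transformations $\gamma_f := L_f : \T_f \xrightarrow{.} \TS_f$, one for each morphism $f$ of $\ixcat$. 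This is exactly the data claimed for a homomorphism $\gamma : \T \rightarrow \TS$.

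Next I would specialise the two coherence diagrams of Definition~\ref{def:left-lax-nat-trans} along $L_I = L_J = L_K = \Id_{\basecat}$. Erasing every occurrence of an object component turns the unit triangle into the right-hand diagram of the statement (that $\gamma_{\id_I} \circ \eta^{\T}_I = \eta^{\TS}_I$), and turns the composition diagram into the left-hand square relating $\mu^{\T}$, $\mu^{\TS}$, and the $\gamma$'s. The one subtle point is the direction reversal: the homomorphism points $\gamma : \T \rightarrow \TS$ while the lax natural transformation points $L : \TS \xrightarrow{.} \T$, and the graded monads are indexed over $\ixcat^\op$ rather than $\ixcat$. I would therefore check that under this $\op$-relabelling the composite written $g \circ f$ in the general diagram is the one appearing in $\mu_{f,g}$ of Corollary~\ref{corollary:category-graded_monad_laws}, so that the general left-lax square degenerates to the displayed multiplication square with the indices in the order shown, reconciling the apparent swap between $\mu_{f,g}$ and $\mu_{g,f}$.

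Finally, to justify that these morphisms form a category I would exhibit identities and composition directly. The identity homomorphism on $\T$ takes $\gamma_f = \id_{\T_f}$, which is a left lax natural transformation with all components identities and satisfies both axioms trivially. For the composite of $\gamma : \T \rightarrow \TS$ and $\delta : \TS \rightarrow \N$ I would take the componentwise vertical composition $(\delta \cdot \gamma)_f = \delta_f \circ \gamma_f : \T_f \xrightarrow{.} \N_f$, and verify it is again a left lax natural transformation with identity object components by pasting the two constituent coherence squares and invoking naturality; associativity and the left/right unit laws then follow from the corresponding laws for vertical composition of natural transformations in $[\basecat,\basecat]$ (Def.~\ref{prop:endofunctors-are-monoids}). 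The main obstacle is the bookkeeping of the second step: correctly tracking how the $\op$ on the index category interacts with the reversed direction of $\gamma$, and confirming that the general left-lax squares collapse to exactly the displayed axioms rather than to their transposes. Everything else is formal unfolding together with the standard fact that natural transformations compose associatively.
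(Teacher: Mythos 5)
Your proposal is correct and follows essentially the same route as the paper: the paper's argument is precisely the unfolding of Definition~\ref{def:left-lax-nat-trans} with every object component set to $\Id_{\basecat}$, including the $\op$-relabelling you flag, which aligns $\mu_{f,g} : \T_f \T_g \xrightarrow{.} \T_{g \circ f}$ from Corollary~\ref{corollary:category-graded_monad_laws} with the general left-lax squares. Your closing verification of identities and componentwise vertical composition (using naturality/interchange) supplies the category-structure claim that the paper merely asserts in its final sentence, which is a worthwhile addition rather than a deviation.
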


Lax natural transformations are also key to the next step of
capturing \emph{parameterised monads}.

\section{Parameterised monads and generalised units}
\label{sec:loose}
The notion of a monad with two indices which denote pre- and
post-conditions was first proposed for the continuation monad
by~\cite{wadler1994monads}. Later,~\cite{atkey2009parameterised}
generalised this idea, introducing the concept of a
\emph{parameterised monad} with a doubly-indexed functor
$\Pm : \ixcat{}^{\op} \times \ixcat{} \rightarrow [\basecat{},
\basecat{}]$ which
can, for example,
model effects with Floyd-Hoare-logic reasoning via indices of pre- and
post-conditions.

\begin{definition}
\label{def:parameterised-monad} (\cite{atkey2009parameterised})\footnote{
Atkey actually presents parameterised monads with a ternary functor
$\Pm : \ixcat^\op \times \ixcat \times \basecat
\rightarrow \basecat$;
we use an isomorphic binary functor here mapping to $[\basecat,\basecat]$
to reduce clutter and as this is akin to presentations of graded
monads on functors $\T : \ixcat^\op \rightarrow [\basecat, \basecat]$.}
A \emph{parameterised monad} comprises a
functor
 $\Pm : \ixcat^\op\times\ixcat\!\rightarrow\![\basecat,\!\basecat]$
and natural transformations
$\eta^{\Pm}_{I} : \Id_{\mathbb{C}} \xrightarrow{.} \Pm (I , I)$
and $\mu^{\Pm}_{I,J,K} : \Pm (I, J) \icirc \Pm (J, K) \xrightarrow{.} \Pm (I, K)$
satisfying analogous unit and associativity axioms to the usual monad
axioms (with the addition of the indexing).

Furthermore, $\eta^{\Pm}$ is dinatural in $I$ and
$\mu^{\Pm}_{I, J, K}$ is dinatural in $J$ and natural in $I$, $K$,
equating to the following dinaturality diagrams for all $f : I \rightarrow J$
and $g : J \rightarrow J'$:
%
\begin{align}
\xymatrix@C=6em@R=1.75em{
\Pm (I, J) \icirc\, \Pm(J', K) \ar[r]^{\text{\footnotesize{$\Pm (I, g) \, \Pm(J', K)$}}}
\ar[d]_{\text{\footnotesize{$\Pm (I, J) \Pm(g, K)$}}}
  & \Pm (I, J') \icirc\, \Pm(J', K) \ar[d]^{\text{\footnotesize{$\mu^\Pm_{I,J',K}$}}} \\
\Pm (I, J) \icirc\, \Pm (J, K) \ar[r]_-{\text{\footnotesize{$\mu^{\Pm}_{I,J,K}$}}} &  \Pm(I, K)
}
\qquad
\xymatrix@C=2.5em@R=1.75em{
\mathsf{\Id} \ar[r]^-{\text{\footnotesize{$\eta^{\Pm}_{I}$}}} \ar[d]_{\text{\footnotesize{$\eta^{\Pm}_{J}$}}}  & \Pm(I, I) \ar[d]^{\text{\footnotesize{$\Pm(I,f)$}}}  \\
\Pm(J, J) \ar[r]_{\text{\footnotesize{$\Pm(f, J)$}}}  & \Pm (I, J)
}
\label{eq:param-dinaturality}
\end{align}
Multiplication $\mu^{\Pm}$ requires that the post-condition of the outer
computation matches the pre-condition of the inner computation. Thus, similarly
to category-graded monads, parameterised monads restrict
the composition of computations. Where category-graded monads differ is that this
restriction is provided via morphisms, whereas the indices of parameterised
monads are just pairs. Thus, category-graded monads can have indices with more computational
content, e.g., proofs (see Example~\ref{exm:monadoid-restrict}).
\end{definition}

\begin{example}
\label{exm:param-typed-state}
\cite[\S2.3.2]{atkey2009parameterised} Mutable state can be modelled
by the state monad with $\T A = (A \times S)^S$ where $S$
represents the type of the state. A parameterised monad provides a
type-refined version of this where
$\Pm \, (S_1, S_2) \, A = (A \times S_2)^{S_1}$ modelling the type of
state at the start ($S_1$) and end of a computation ($S_2$).  The
parameterised monad operations have the same definition as the state
monad operations $\eta^\Pm_{S} = x \mapsto \lambda s . (x , s)$ and
$\mu^\Pm_{S_1,S_2,S_3} = c \mapsto \lambda s_1 . \textsf{app} \; (c \; s_1)$
where $c : (((A \times S_3)^{S_2}) \times S_2)^{S_1}$.  Reading and
writing from the store is provided by two families of operations:
$\textit{read}_{S} : \Pm (S, S) \, S$
 and $\textit{store}_{S_0,S} : S
\rightarrow \Pm (S_0, S) \, 1$.
\end{example}

This idea of using lax functors to generalise parameterised
monads has been conjectured before by~\cite{Capriotti} in a blog post,
but without further details.
Here we show that,
on there own, category-graded monads (lax functors) do not subsume parameterised monads, but
they instead capture a subset of parameterised monads restricted
to discrete indexing categories (with only identity morphisms). We
show that discrete parameterised monads are category-graded monads
(Section~\ref{sec:discrete}), and thus similar in power to
monoid-graded monads (i.e., without ordering). We then add the additional structure of
\emph{generalised units} to category-graded monads (which arises as a
kind of lax natural transformation) which accounts for the additional
power of full parameterised monads (Section~\ref{sec:full}).

\subsection{Discrete parameterised monads are category-graded monads}
\label{sec:discrete}

\newcommand{\dis}{\Delta} 
\newcommand{\ind}{\nabla} 
\newcommand{\discrete}[1]{\dis{(#1)}} 

Standard notions of \emph{discrete} and \emph{indiscrete} categories
are key here. We thus recall their definitions:

\begin{definition}
  A category is \emph{discrete} if its only morphisms are identities.

  The functor $\dis : \cats \rightarrow \cats$ discretises a
  category by discarding all but the identities
\end{definition}

\begin{definition}
   A category is \emph{indiscrete} if there is exactly
   one morphism between every pair of objects.

   The functor $\ind : \cats \rightarrow \cats$
   maps a category to its indiscrete form by
   replacing morphisms with pairs of the source and target objects (``dominoes''), i.e.,
   $\ind(\mathbb{C})(a, b) = \{(a, b)\}$.
  Identities are pairs of identical objects and
  composition of $(a, b)$ with $(b, c)$ yields $(a, c)$.
\label{def:cat-of-pairs}
\end{definition}

A standard result is that
the $\dis$ and $\ind$ functors arise from a single adjoint
triple (see Appendix, Prop.~\ref{prop:discrete-indiscrete}).
The symbol $\dis$ is often used for the diagonalisation functor,
and $\ind$ co-diagonalisation. We reuse the notation
as discretisation is akin to restricting a category to
the diagonal of the adjacency matrix of its morphisms.

\begin{definition}
A \emph{discrete parameterised monad} is a parameterised monad indexed by a
discrete category, with functor
$\Pm :\discrete{\ixcat}^\op \times \discrete{\ixcat} \rightarrow
[\basecat, \basecat]$ which has a degenerate morphism mapping
which is always the identity
$\Pm \, (\id, \id) = id : \Pm \, (I, I) \, A \rightarrow \Pm \, (I, I)
\, A$ since there are only identity morphisms in $\discrete{\ixcat}$.
Dinaturality squares (eq.~\ref{eq:param-dinaturality}) trivially
hold as they collapse to identities.
\end{definition}

\begin{proposition}
  Every discrete parameterised monad is a category-graded monad.
\label{prop:discrete-to-monadoid}
\end{proposition}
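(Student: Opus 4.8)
The plan is to build, from a discrete parameterised monad, a category-graded monad whose operations are literally the parameterised-monad operations, using the observation that parameterised-monad multiplication threads a shared middle index in exactly the way morphisms compose in an \emph{indiscrete} category (Definition~\ref{def:cat-of-pairs}). The crucial point is that a parameterised monad is indexed by \emph{all} pairs of objects $(I,J)$, not by the morphisms of $\ixcat$; so the correct indexing category is not $\ixcat$ but $\ind(\ixcat)$, whose unique morphism $I \to J$ is the ``domino'' $(I,J)$ and in which $(I,J)$ composed with $(J,K)$ is $(I,K)$. Since $\ind$ depends only on objects, $\ind(\ixcat) = \ind(\discrete{\ixcat})$, so this indexing category is available directly from the data of the discrete parameterised monad.

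Concretely, I would set $\jxcat = \ind(\ixcat)$ and define a lax functor $\T : \jxcat^{\op} \to \Endo{\basecat}$ by $\T\,I = \basecat$ on the single object (as required in Definition~\ref{def:monadoid}) and $\T_{(I,J)} = \Pm(I,J)$ on morphisms. This is unambiguous because $\ind(\ixcat)$ has exactly one morphism between any two objects, and since a discrete parameterised monad has trivial morphism action there is no functorial data left to reconcile. The lax structure transports directly: the unit $\eta_I : \Id \xrightarrow{.} \T_{\id_I}$ is $\eta^{\Pm}_I$, using $\id_I = (I,I)$ so that $\T_{\id_I} = \Pm(I,I)$; and for $f = (I,J)$ and $g = (J,K)$, whose composite in $\jxcat$ is $g \circ f = (I,K)$, the multiplication $\mu_{f,g} : \T_f \icirc \T_g \xrightarrow{.} \T_{g \circ f}$ is exactly $\mu^{\Pm}_{I,J,K} : \Pm(I,J) \icirc \Pm(J,K) \xrightarrow{.} \Pm(I,K)$. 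Naturality of $\eta$ and $\mu$ as transformations of endofunctors on $\basecat$ is precisely the naturality of $\eta^{\Pm}$ and $\mu^{\Pm}$.

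It then remains to verify the lax-functor coherences of Corollary~\ref{corollary:category-graded_monad_laws}, and here the correspondence is tight: the unitality triangle reduces to the unit axiom of the parameterised monad and the associativity square reduces to associativity of $\mu^{\Pm}$, because the way $(I,J)$, $(J,K)$, $(K,L)$ compose unambiguously in $\ind(\ixcat)$ reproduces exactly the index-threading of the $\mu^{\Pm}$ axioms. No dinaturality hypothesis from Definition~\ref{def:parameterised-monad} is needed, since a category-graded monad imposes no such condition and for a discrete parameterised monad those squares collapse to identities in any case. Finally, injectivity in the sense used for the overview diagram is immediate, as the parameterised monad is recovered from $\T$ by $\Pm(I,J) = \T_{(I,J)}$ together with reading off $\eta$ and $\mu$. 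I expect no genuine obstacle; the only subtlety is resisting the natural-looking but wrong choice of indexing by $\ixcat$ itself, and instead using its indiscrete completion $\ind(\ixcat)$ so that every pre/post pair is a unique morphism and middle-index matching becomes ordinary categorical composition.
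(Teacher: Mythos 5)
Your proposal matches the paper's proof essentially exactly: both index by the indiscrete category $\ind(\ixcat)$ (``dominoes'' as morphisms), define $\T_{(I,J)} = \Pm(I,J)$ with $\eta_I = \eta^{\Pm}_I$ and $\mu_{(I,J),(J,K)} = \mu^{\Pm}_{I,J,K}$, and establish injectivity via the inverse that reads the parameterised monad back off the singleton homsets. The paper spells out that inverse construction slightly more explicitly, but the argument is the same.
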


\begin{proof}
Let $(\Pm, \mu^{\Pm}, \eta^{\Pm})$ be a discrete parameterised monad on
$\Pm : \discrete{\ixcat}^{\op} \times \discrete{\ixcat} \rightarrow
[\basecat, \basecat]$. There is then a category-graded
monad $\T : \ind(\ixcat{})^\op \rightarrow \sing{\basecat}$, graded by
the indiscrete version of $\ixcat$ which has
pairs of objects as morphisms, and thus the morphism mapping of $\T$ is
defined $\T \, (I, J) = \Pm \, (I, J)$ with operations:
\begin{align*}
\eta_I  : \Id \xrightarrow{.} \T_{(I,I)} = \eta_I^{\Pm}
\qquad
\mu_{(I,J),(J,K)} : \T_{(I,J)} \icirc \T_{(J,K)} \xrightarrow{.}
                    \T_{(I,K)} = \mu^\Pm_{I,J,K}
\quad \textit{(for all $(I, J), (J, K) \in \morphs{\ind(\ixcat)}$)}
\end{align*}
This construction is injective, with an
inverse mapping from indiscrete-category-graded monads
 to discrete parameterised monads.
Each unique morphism of an indiscrete category $\jxcat{}$ can
be identified by its singleton homset $\jxcat{}(I, J)$.
Thus, there is a map from indiscrete-category-graded monads $\TS : \jxcat{}^\op \rightarrow \sing{\basecat}$ to discrete parameterised monads
$\Pm : \dis(\jxcat)^\op \times \dis(\jxcat) \rightarrow
[\basecat, \basecat]$ where $\Pm \, (I, J) \, A = \TS \; (\jxcat{}(I, J))
\, A$ and
$\Pm \, (I, J) \, f = \TS_{\jxcat{(I, J)}} f$, and operations
 $\eta^{\Pm}_{I} = \eta_{I}$ and $\mu^{\Pm}_{I,J,K} =
\mu_{\jxcat(I,J),\jxcat(J,K)}$.
\end{proof}


\subsection{Parameterised monads are category-graded monads with
a generalised unit}
\label{sec:full}

Of course, parameterised monads need not be discrete and therefore
may have non-degenerate morphism mappings
$\Pm \, (f, g) \, A : \Pm \, (I, J) \,A \rightarrow \Pm \, (I', J') \, A$
for $f : I' \rightarrow I$ and $g : J \rightarrow J'$ in $\ixcat$.
Through the Floyd-Hoare perspective, this morphism
mapping corresponds to pre-condition
strengthening (via $f$) and post-condition weakening (via $g$), i.e., a
kind of approximation on the indices.
Whilst object pairs $\ixcat^{\op} \times \ixcat$ do not
contain computational content in the way that a morphism may
(e.g., if it is a function), the morphism mapping provides a way
to change an effectful computation via the morphisms of $\ixcat{}$.

A category-graded monad could be constructed from a parameterised monad following
the approach for discrete parameterised category-graded monads with $\T : \ind(\ixcat{})^\op
\rightarrow \sing{\basecat}$ defined on morphisms as $\T \, (I, J) = \Pm \, (I,
J)$ (Prop~\ref{prop:discrete-to-monadoid}). However, such a
construction is non-injective since
the morphism mapping of $\Pm$ is discarded.
Alternatively, we might try to define $\T : \ixcat^\op \rightarrow
\sing{\basecat}$ with $\T (f : I \rightarrow J) = P (I, J)$.
The morphism mapping $\Pm (g, f) : \Pm (I, J) \xrightarrow{.} \Pm
(I', J')$ for $g : J \rightarrow J'$, $f : I' \rightarrow I$
 would then correspond to a family of natural transformations
of the form $\alpha_{f, g, k} : \T \, k \xrightarrow{.} \T (g \circ k \circ f)$
with $k : I \rightarrow J$.
However, no such family is elicited by a (2-)category-graded monad
alone. Furthermore, $\T (f : I \rightarrow J) = \Pm (I, J)$ is non-injective
since it maps to $\Pm (I, J)$ only when there is a morphism
$I \rightarrow J \in \morphs{\ixcat}$ despite the $\Pm$ object mapping
being defined for all object pairs $\ixcat^\op \times \ixcat$.

Additional structure is therefore needed to capture
full parameterised monads. For this, we introduce the notion
of a \emph{generalised unit}.

\begin{definition}
Let
$(\T : \ixcat^\op \rightarrow \Endo{\basecat}, \mu,
\eta)$ be an $\ixcat$-graded monad and
let $\mathbb{S}$ be a \emph{wide subcategory} $\mathbb{S} \subseteq \ixcat$ where
$\objs{\mathbb{S}} = \objs{\ixcat}$ (all the same objects) and $\iota :
\mathbb{S} \rightarrow \ixcat$ is the inclusion functor.

A \emph{generalised unit} augments a category-graded monad with
a right lax natural transformation (Def.~\ref{def:left-lax-nat-trans})
$R : \mathbf{Id} \xrightarrow{.} (\T \circ \iota)$
mapping from the identity category-graded monad (Example~\ref{exm:identity-monadoid}).
The right lax natural transformation
has $R_I = \Id_{\basecat}$ for all $I$
and thus its family of maps has signature $R_f : A \rightarrow \T_{\iota(f)} A$.
Subsequently, the right lax natural transformation axioms are
specialised to the following axioms, where we let $\geneta_f = R_f$
and make the inclusion functor $\iota$ implicit:
\newcommand{\hiota}[1]{#1}
\begin{equation*}
\xymatrix@R=2em@C=3em{
A \ar[d]_-{\text{\small{$\geneta_{g \circ f}$}}}
\ar[r]^-{\text{\small{$\geneta_f$}}} &
\T_{\hiota{f}}\, A
\ar[d]^-{\text{\small{$\T_{\hiota{f}} \, \geneta_g$}}} \\
\T_{\hiota{g \circ f}}\, A &
\ar[l]^-{\;\;\text{\small{$\mu_{f,g}$}}} \T_{\hiota{f}} \T_{\hiota{g}}\, A
}
\qquad
\xymatrix@R=2em@C=3em{
A \ar[r]^-{\text{\small{$\eta_I$}}} \ar[dr]_-{\text{\small{$\geneta_{\id_I, A}$}}}
 & \T_{\id_I}\, A \ar[d]^-{\text{\small{$\id_{\T_{\id_I A}}$}}} \\
 & \T_{\hiota \id_I}\, A
}
\end{equation*}
%
%
The left square states that
``generating'' two computations indexed by morphisms in
$\morphs{\mathbb{S}}$ via $\geneta$
and multiplying via $\mu$ is equivalent
to generating via $\geneta$ the computation indexed by the composition of
the indices.
The right square is well defined by the requirement that $\objs{\mathbb{S}} = \objs{\ixcat}$,
thus all identity morphisms of $\ixcat$ are in $\mathbb{S}$, and $\geneta_{\id_I}$
and $\eta_I$ coincide.

We refer to the family of maps $\geneta_{f, A} : A \rightarrow \T_f\, A$ as
the \emph{generalised unit}, with the notation alluding to $\eta$,
since it has a similar form to unit but is defined for all $f :
I \rightarrow J \in \morphs{\mathbb{S}}$ rather than just on identities.
\end{definition}

\begin{example}
  \label{prop:monadoid-as-loose}
  Every category-graded monad has a generalised unit
  with $\mathbb{S} = \discrete{\ixcat}$, i.e., the subcategory containing
only the identity morphisms. 
Since $\discrete{\ixcat}$ contains only the identity morphisms,
$\geneta$ need only be defined
on identities with
$\geneta_{id_I} = \eta_I : \Id_{\mathbb{C}} \xrightarrow{.}
\T_{id_I}$.
\end{example}

\begin{example}
Consider a pomonoid-graded monad $\T$ presented as a 2-category-graded monad
(Proposition~\ref{prop:gmonad-to-2-monadoid}) with $(\mset, \mmult,
\munit, \sqsubseteq)$ and $\T : 1(\mset) \rightarrow \sing{\basecat}$.
If the unit element is the bottom element of
the ordering, i.e., $\forall m \in \mset . \munit \sqsubseteq m$,
then there is a generalised unit with $\mathbb{S} = \mathbb{I}$
defined
$\geneta_m = \T(\munit \sqsubseteq m) \circ \eta : \mathsf{Id}_{\mathbb{C}}
\xrightarrow{.} \T_m$.
\end{example}

\newcommand{\prodext}[1]{{#1^{\ind}}}

\noindent
\begin{proposition}
  Every parameterised monad is a category-graded monad with a
  generalised unit.
\label{prop:param-monadoid-loose}
\end{proposition}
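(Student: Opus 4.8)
The plan is to reuse the naive assignment $\T_f = \Pm(I,J)$ for $f : I \to J$ discussed just above, taking the indexing category of the category-graded monad to be $\ixcat$ itself (rather than its indiscrete completion $\ind(\ixcat)$, as in the discrete case of Prop.~\ref{prop:discrete-to-monadoid}), and to recover the lost morphism mapping of $\Pm$ by means of a generalised unit defined over the whole of $\ixcat$. Concretely, I would define $\T : \ixcat^\op \to \Endo{\basecat}$ by $\T_f = \Pm(I,J)$ for each $f : I \to J \in \morphs{\ixcat}$, with unit $\eta_I = \eta^\Pm_I$ and multiplication $\mu_{f,g} = \mu^\Pm_{I,J,K}$ for $f : I \to J$ and $g : J \to K$. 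Since $\T_f$, $\eta$ and $\mu$ depend only on the endpoints of the indexing morphisms, the lax-functor associativity and unit squares of Cor.~\ref{corollary:category-graded_monad_laws} become verbatim the associativity and unit axioms of the parameterised monad (Def.~\ref{def:parameterised-monad}), so $\T$ is a category-graded monad.

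For the generalised unit I would take $\mathbb{S} = \ixcat$ (a category is a wide subcategory of itself, with $\iota = \id$), so that $\geneta$ is indexed by \emph{all} morphisms of $\ixcat$ and can therefore distinguish parallel morphisms $f, f' : I \to J$, which is precisely what a bare category-graded monad (whose $\T_f$ sees only endpoints) cannot do. For $f : I \to J$ I would set $\geneta_f = \Pm(\id_I, f) \circ \eta^\Pm_I : \Id \xrightarrow{.} \T_f$, a natural transformation as a composite of the morphism action of $\Pm$ with $\eta^\Pm$. Dinaturality of $\eta^\Pm$ (right square of eq.~\ref{eq:param-dinaturality}) gives the equivalent presentation $\geneta_f = \Pm(f, \id_J) \circ \eta^\Pm_J$, and I would use whichever is convenient on each side. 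The right square axiom of the generalised unit is then immediate, since $\geneta_{\id_I} = \Pm(\id_I, \id_I) \circ \eta^\Pm_I = \eta^\Pm_I = \eta_I$ by functoriality of $\Pm$.

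The substance is the left square axiom $\geneta_{g \circ f} = \mu_{f,g} \circ \T_f \geneta_g \circ \geneta_f$. Using functoriality of $\Pm$ in its (covariant) second argument, $\geneta_{g \circ f} = \Pm(\id_I, g) \circ \geneta_f$, so it suffices to establish the single identity
\[
\Pm(\id_I, g) \;=\; \mu^\Pm_{I,J,K} \circ \Pm(I,J)\,\geneta_g ,
\]
i.e.\ that post-condition weakening is recovered from multiplication and the generalised unit. I would prove this by a short diagram chase: expand $\geneta_g = \Pm(\id_J, g) \circ \eta^\Pm_J$, move the whiskered $\Pm(\id_J, g)$ past $\mu^\Pm$ using naturality of $\mu^\Pm$ in its last index, and then cancel $\eta^\Pm_J$ via the right unit law $\mu^\Pm_{I,J,J} \circ \Pm(I,J)\,\eta^\Pm_J = \id$. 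The symmetric identity for pre-condition strengthening, $\Pm(f, \id_J) = \mu^\Pm_{I',I,J} \circ (\geneta_f \cdot \Pm(I,J))$ for $f : I' \to I$, follows dually using naturality of $\mu^\Pm$ in its first index together with the left unit law.

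These two identities also deliver the injectivity that the proposition asserts: from the resulting structure one reads off the object mapping of $\Pm$ (from $\T$), its unit and multiplication (from $\eta$ and $\mu$), and, crucially, its morphism mapping, since $\Pm(f,g) = \Pm(f, \id) \circ \Pm(\id, g)$ by functoriality on the product category and each factor is recovered from $\mu$ and $\geneta$ by the identities above; hence distinct parameterised monads give distinct category-graded monads with generalised unit. I expect these two recovery identities to be the main obstacle: although each is a routine consequence of one naturality square and one unit law, care is needed to keep the two variances of $\Pm$, the whiskering, and the interchange law straight, and to invoke the correct presentation of $\geneta$ (via dinaturality of $\eta^\Pm$) on each side.
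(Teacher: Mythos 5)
Your construction is sound as far as it goes---it is exactly the ``constructive parameterised monad'' of Example~\ref{exm:monadoid-restrict}, and your verification of the generalised-unit axioms, including the recovery identity $\Pm(\id_I,g) = \mu^{\Pm}_{I,J,J'} \circ \Pm(I,J)\,\geneta_g$, is correct---but it does not prove the proposition. The proposition, under the paper's stated convention for ``every A is a B'', asserts an \emph{injective} map from parameterised monads into category-graded monads with generalised unit, and the map you define is not injective. Grading by $\ixcat$ itself means the structure $(\T,\eta,\mu,\geneta)$ records $\Pm(I,J)$ only for those pairs $(I,J)$ with $\ixcat(I,J) \neq \emptyset$; whenever there is no morphism $I \rightarrow J$, the functor $\Pm(I,J)$ and every unit, multiplication and morphism-action involving it is simply discarded. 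For instance, if $\ixcat$ is the discrete category on two objects $I \neq J$, two parameterised monads that agree on $\Pm(I,I)$ and $\Pm(J,J)$ (with their units and multiplications) but differ on $\Pm(I,J)$ produce the same category-graded monad with generalised unit. Your final paragraph's recovery argument tacitly assumes that every object pair arises as the endpoints of some morphism of $\ixcat$; the paper flags precisely this failure just before introducing generalised units (``$\T(f : I \rightarrow J) = \Pm(I,J)$ is non-injective since it maps to $\Pm(I,J)$ only when there is a morphism $I \rightarrow J \in \morphs{\ixcat}$ despite the $\Pm$ object mapping being defined for all object pairs'').

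The paper's proof repairs this with one extra idea: instead of grading by $\ixcat$, it grades by the \emph{pair completion} of $\ixcat$, the category with the same objects and homsets $\ixcat(I,J) \uplus \{(I,J)\}$, i.e., one formal morphism freely added for each ordered pair of objects, where any composite involving a formal morphism is the formal morphism on the outer endpoints. Setting $\T_f = \Pm(I,J)$ for all morphisms $f : I \rightarrow J$ of this completion keeps every $\Pm(I,J)$, including those over hom-empty pairs; the wide subcategory is $\mathbb{S} = \ixcat$ embedded by the left injection $\inj_1$, and on it the generalised unit is defined exactly as you define yours, $\geneta_f = \Pm\,\id_I\,f \circ \eta^{\Pm}_I = \Pm\,f\,\id_J \circ \eta^{\Pm}_J$, the two forms agreeing by dinaturality of $\eta^{\Pm}$. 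Injectivity then holds because this construction admits a left inverse (Proposition~\ref{prop:parameterised-monad-result-inverse}), which reconstructs $\Pm$---its morphism mapping recovered from $\mu$ and $\geneta$ by essentially the identities you derived. So your generalised-unit analysis supplies the right ingredient; the missing idea is the completion of the index category needed to preserve the full object mapping of $\Pm$.
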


\begin{proof}
Let $(\Pm  : \ixcat^{\op} \times \ixcat \rightarrow [\basecat,
\basecat], \mu^{\Pm}, \eta^{\Pm})$ be a parameterised monad.
From $\ixcat$ we construct a category $\prodext{\ixcat}$
called the \emph{pair completion} of $\ixcat$ where
objects are the objects of $\ixcat$, and morphisms from $I$ to $J$ are
either a morphism in $\ixcat(I, J)$ or a pair $(I,J)$,  that is, we have homsets
$\prodext{\ixcat}(I, J) = \ixcat(I, J) \, \uplus \, \{(I, J)\}$.
%
Thus, the homsets are the disjoint union of $\ixcat$ morphisms
or a pair $(I, J)$, with injections into it written $\inj_1$ and
$\inj_2$. Identities of $\prodext{\ixcat}$ are by $\inj_1(id_I)$ and composition is
defined:
\begin{equation}
\label{eq:prodext-comp}
(g : J \rightarrow K) \circ (f : I \rightarrow J) = \begin{cases}
\inj_1 (g' \circ f') & f = \inj_1(f') \; \wedge \; g = \inj_1(g') \\
\inj_2 (I, K) & \textit{otherwise}
\end{cases}
\end{equation}
We then define a category-graded monad
$\T : (\prodext{\ixcat})^\op \rightarrow \sing{\basecat}$ where $\T \, (f : I \rightarrow J) = \Pm \; (I, J)$
with operations:
$\eta_{I} = \eta^{\Pm}_{I}$
and
$\mu_{f,g} = \mu^{\Pm}_{I,J,K}$ for $f : I \rightarrow
  J, g : J \rightarrow K \in \morphs{\prodext{\ixcat}}$.
The source and target objects $I$, $J$, and $K$
are used to index $\mu^\Pm$ without needing to determine
whether the morphisms $f$ and $g$ are in the left or right injection
 of $\prodext{\ixcat}$.

Let $\mathbb{S} =
\ixcat$ which is a subcategory of $\prodext{\ixcat}$ via the inclusion which is the identity on objects and
left injection $\inj_1$ on morphisms. This satisfies
the property that all identities of $\prodext{\ixcat}$ are in
$\mathbb{S}$ as identities are given by $\inj_1 \, id$.
Generalised unit $\geneta$ then has two equivalent definitions:
\begin{equation}
\forall f : I \rightarrow J \in \morphs{\mathbb{\ixcat}} \, . \;\;\;
\geneta_{f} \, = \, \Pm \, id_I \, f \circ \eta^P_I \, = \, \Pm \, f \, id_J \circ \eta^P_J
\label{eq:geneta-functor-morphism}
\end{equation}
These two definitions
of $\geneta$ are equivalent
by dinaturality of $\eta^{\Pm}$
(the right equality \emph{is}
the dinaturality condition). The generalised unit
axioms follow from bifunctor axiom $\Pm \, id_I \, id_I = id$
 and dinaturality of $\mu^{\Pm}$.

 In the Appendix,
 Proposition~\ref{prop:parameterised-monad-result-inverse} shows that
 the above construction has a left inverse, i.e., \emph{every
   category-graded monad with generalised unit indexed by
   $\prodext{\ixcat}$ with $\mathbb{S} = \ixcat$ has a corresponding
   parameterised monad}. We give an outline here.

For a category-graded monad on
 $\T : (\prodext{\ixcat})^\op \rightarrow \sing{\basecat}$
with $\geneta$
 and subcategory $\mathbb{S} = \ixcat$
we can construct a parameterised
 monad with functor $\Pm : \ixcat^{\op}
  \times \ixcat \rightarrow [\basecat, \basecat]$
on objects as $\Pm \, (I, I) = \T \; (\inj_1 \id_I)$ and
 $\Pm \, (I, J) = \T \; (\inj_2 \; (I, J))$ (for $I \neq J$).
The morphism mapping $\Pm (f, g) \, h$
is built from $\mu$ and $\geneta$,
where for all $f : I' \rightarrow I, g : J \rightarrow J'$ $\in
   \morphs{\ixcat}$, $h : A \rightarrow B \in \morphs{\basecat}$
 (with shorthand $k = \inj_2 (I, J)$, $\bar{f} =
 \inj_1 \, f$ and $\bar{g} = \inj_1 \, g$):
 \[
 \hspace{-0.3em}
 \xymatrix@C=1.3em@R=0.5em{
 \T_k \, A
 \ar[rr]^-{\text{\footnotesize{$\geneta_{\bar{f}}\T_k\;$}}}  & &
 \T_{\bar{f}} \T_k \ar[rr]^-{\text{\footnotesize{$\T_{\bar{f}} \, \T_k \,
   \geneta_{\bar{g}}\;\;$}}} \, A
 & &
 \T_{\bar{f}} \T_k \T_{\bar{g}} \, A
 \ar[rr]^-{\text{\footnotesize{$\mu_{\bar{f},k,\T {\bar{g}}}$}}} & &
 \T_{(k \circ \bar{f})} \T_{\bar{g}} \, A
 \ar[rr]^-{\text{\footnotesize{$\mu_{k \circ \bar{f},\bar{g}}$}}}
  & &
 \T_{(\bar{g} \circ k \circ \bar{f})} \, A
 \ar[rr]^-{\text{\footnotesize{$\T h$}}}
 & &
 \T_{(\bar{g} \circ k \circ \bar{f})} \, B
 }
 \]
Bifunctoriality of $\Pm$ follows
 from the right lax natural transformation
 and category-graded-monads axioms.
 The parameterised monad operations are provided
 by category-graded monad $\mu^{\Pm}_{I,J,K} =
 \mu_{\inj_2 (I,J), \inj_2 (J,K)}$ and generalised unit $\eta^{\Pm}_{I} =
 \geneta_{\inj_1 \id_I}$.

 This construction is the inverse of the former, thus
 there is just one category-graded monad with
 generalised unit for every parameterised monad.
 \end{proof}

 \begin{corollary}
   \label{cor:final-result}
  2-cat-graded monads with a generalised unit
   subsume graded and parameterised monads.
\end{corollary}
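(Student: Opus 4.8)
The plan is to observe that the apex structure --- 2-category-graded monads equipped with a generalised unit --- contains both of the previously-embedded structures as degenerate cases, and then to compose the embeddings already established in Propositions~\ref{prop:gmonad-to-2-monadoid} and~\ref{prop:param-monadoid-loose} with the relevant degeneracies. Since ``subsume'' here means the existence of an injective map of structures (as stipulated in the introduction), the whole argument reduces to checking that composing injective maps with canonical inclusions of degenerate data stays injective.

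First I would record two specialisations of the apex. On the one hand, a 2-category-graded monad is exactly a 2-category-graded monad carrying the \emph{trivial} generalised unit: by Example~\ref{prop:monadoid-as-loose} every (2-)category-graded monad admits a generalised unit with $\mathbb{S} = \discrete{\ixcat}$, where $\geneta$ is forced to agree with $\eta$ on identities and carries no extra data. On the other hand, a category-graded monad with a generalised unit is a 2-category-graded monad with a generalised unit whose indexing 2-category has only identity 2-cells: viewing the 1-category $\prodext{\ixcat}$ as a 2-category with trivial 2-morphisms turns the lax functor of Proposition~\ref{prop:param-monadoid-loose} into a lax 2-functor for free (the 2-functorial axioms of Definition~\ref{def:2-monadoid} collapse to identities), and the right lax natural transformation defining $\geneta$ uses only 1-categorical data and so is unchanged.

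Then I would compose. For graded monads, Proposition~\ref{prop:gmonad-to-2-monadoid} embeds every pomonoid-graded monad as a 2-category-graded monad $\T : 1(\mset) \rightarrow \sing{\basecat}$; adjoining the trivial generalised unit from the first specialisation lands it in the apex. For parameterised monads, Proposition~\ref{prop:param-monadoid-loose} embeds every parameterised monad as a category-graded monad with generalised unit on $\prodext{\ixcat}$; the second specialisation reinterprets this in the apex. Both composites are injective: the underlying maps of Propositions~\ref{prop:gmonad-to-2-monadoid} and~\ref{prop:param-monadoid-loose} are injective (the latter by the left inverse exhibited in its proof), and each specialisation only adjoins canonical structure --- identity 2-cells, respectively the forced generalised unit on $\discrete{\ixcat}$ --- and so introduces no collapsing.

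I expect the only point requiring genuine care, rather than routine bookkeeping, to be the compatibility of the two flavours of degeneracy: one must confirm that the generalised-unit axioms and the 2-functorial axioms do not interfere. This is benign because in each branch exactly one of the two pieces of added structure is trivial --- for graded monads the generalised unit is the trivial one on $\discrete{\ixcat}$, while for parameterised monads it is the 2-cells that are trivial --- so in both cases the relevant coherence squares hold vacuously or reduce to axioms already verified in the cited propositions. Hence no new coherence obligation arises, and the corollary follows by composition.
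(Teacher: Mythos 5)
Your proposal is correct and follows essentially the same route as the paper, which states this corollary without explicit proof as an immediate consequence of Propositions~\ref{prop:gmonad-to-2-monadoid} and~\ref{prop:param-monadoid-loose}: graded monads enter the apex via the trivial generalised unit of Example~\ref{prop:monadoid-as-loose}, and parameterised monads via viewing the indexing category $\prodext{\ixcat}$ as a 2-category with only identity 2-cells. Your additional care about injectivity and the non-interference of the two degeneracies simply makes explicit what the paper leaves implicit.
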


The mapping of a parameterised monad to our structure here
identifies two parts of a parameterised monad:
the object mapping of $\Pm$, defined for all pairs of objects and the
morphism mapping of $\Pm$, defined for all morphisms. These two classes
are grouped into a single category via $\prodext{\ixcat}$
such that the generalised unit is defined only on the actual morphisms
of $\ixcat$, corresponding to the morphism mapping part of $\Pm$.

\begin{remark}
\citet{atkey2009parameterised}
  describes parameterised monads in a similar way to our
  description of category-graded monads: ``\emph{parameterised monads are to monads as categories are to
    monoids}''. He illustrates
  this by generalising the {writer monad}
  $\T A = M \times A$ for some monoid on $M$, replacing
  $M$ with morphisms of a small category. That is,
  for some small category $S$ then
  $\Pm : S^{\op} \times S \rightarrow [\mathbf{Set}, \mathbf{Set}]$ is
  defined $\Pm (I, J) \, A = S(I, J) \times A$, i.e., the
  set of $I \rightarrow J$ morphisms paired with the set $A$. Then
  $\eta^\Pm_{I,A} a \mapsto (id_I, a)$ and
  $\mu^{\Pm}_{I,J,K,A} (f, (g, a)) \mapsto (g \circ f, a)$. This
  construction is essentially a value-level version of what the
  indices of a category-graded monad provide. Thus category-graded monads
  provide a static trace of morphism composition (recall
  Example~\ref{exm:static-trace}),
  whilst parameterised monads can give only a dynamic, value-level trace.
\end{remark}

\begin{example}\emph{Making parameterised monads constructive via a
    category-graded monad.}
  We define a class of category-graded monads based on
parameterised monads, but that are \emph{not} parameterised monads. 
%
A parameterised monad
$(\Pm : {\ixcat}^{\op} \times {\ixcat} \rightarrow [\basecat,
\basecat], \mu^{\Pm}, \eta^{\Pm})$
induces a cat-graded monad on $\T : \ixcat^\op \rightarrow \sing{\basecat}$
with $\T \, (f : I \rightarrow J) = \Pm \; (I, J)$,
i.e., source and target objects of $f$ provide the
parameterised monad indices with operations
$\eta_{I} = \eta^{\Pm}_{I}$ and
$\mu_{f,g} = \mu^{\Pm}_{I,J,K}$ (for $f : I
\rightarrow J$, $g : J \rightarrow K$) and
a generalised unit
$\geneta_f = \Pm \, \id_I \, f \circ \eta^{\Pm}_I$.
This gives a restricted view of the
parameterised monad $\Pm$, allowing computations
$\Pm \; (I, J)$ to be used only when there is a morphism (e.g., a proof
or ``path'') $I \rightarrow J \in \morphs{\ixcat}$.
We thus call this a \emph{constructive} parameterised monad.
This example is only possible
with the additional power of category-graded monads.
\label{exm:monadoid-restrict}
\end{example}

\section{Example: combining graded monads and parameterised monads}
\label{sec:combined-example}
\newcommand{\ubjudg}[4]{\vdash_{#1} #2 : #3 \Rightarrow #4}

Since 2-category-graded monads with generalised units
unify both parameterised and
graded monads (Corollary~\ref{cor:final-result})
then they can provide a model for systems which
combine both quantitative reasoning and program logics into a single
structure. For example,
\cite{DBLP:conf/icalp/BartheGGHS16} define a probabilistic Hoare Logic
(\textsf{aHL}) which provides a quantitative analysis of the ``union
bound'' of probabilistic computations.  Judgments in \textsf{aHL} for
a program $c$ are of the form: $\ubjudg{\beta}{c}{\phi}{\psi}$ where
the initial state of a program $c$ satisfies pre-condition $\phi$ and
after execution produces a distribution of states for which $\psi$
holds. The annotation $\beta$ is the maximum probability that $\psi$
does not hold. Derivations of judgments track the change in this
probability bound $\beta$ by structure on the annotation (where
$\beta, \beta' \in [0, 1]$ and $+$ is saturating at $1$):
\begin{align*}
\hspace{-2em}\begin{array}{c}
\dfrac{}{\ubjudg{0}{\mathbf{skip}}{\phi}{\phi}} \trule{skip}
\quad
\dfrac{}{\ubjudg{0}{x \leftarrow e}{\phi[e/x]}{\phi}} \trule{assgn}
\quad
\dfrac{\ubjudg{\beta}{c}{\phi}{\phi'} \;\;
  \ubjudg{\beta'}{c'}{\phi'}{\phi''}}{
  \ubjudg{\beta + \beta'}{c; c'}{\phi}{\phi''}} \trule{seq}
\\[1.5em]
\dfrac{\models \phi' \rightarrow \phi \quad \vdash_{\beta} c :
\phi \Rightarrow \psi \quad \models \psi \rightarrow \psi'
\quad \beta \leq \beta'}{\vdash_{\beta'} c : \phi' \Rightarrow
\psi'}\trule{weak}
\end{array}
\end{align*}
The consequence rule (called \emph{weak} above) combines pre-condition
strengthening and post-condition weakening with approximation of the
probability upper-bound.

We give a 2-category-graded monadic semantics to this system, using a product of two
indexing categories: the 2-category of the additive
monoid over $[0, 1]$ (with ordering $\leq$) and the category $\prodext{\mathbf{Prop}}$
whose morphisms are pairs of
propositions $\mathbf{Prop}$ and logical implications.
For brevity we consider just the
subset of the \textsf{aHL} system given above, though the rest of
the system can be readily modelled.

Barthe \emph{et al.} interpret programs as
functions from a store type $\mathsf{State}$ to distributions over
stores $\textbf{Distr}(\mathsf{State})$. We extend this to
include a return value of type $A$ denoting
$\mathsf{D}_A = \mathsf{State} \rightarrow
\textbf{Distr}(\mathsf{State} \times A)$.
We define a 2-category-graded monad on $\mathbf{Set}$
as a lax functor $\T : ([0, 1] \times \prodext{\mathbf{Prop}})^{\mathsf{op}}
\rightarrow [\mathbf{Set}, \mathbf{Set}]$ with
generalised unit, combining Atkey's parameterised monad for a program logic (which
refines the set of denotations to valid store
transformations~\cite[p.27]{atkey2009parameterised}) with the
additional validity requirements with regards probabilities in
\textsf{aHL} (the probability that the return state does not satisfy
the post-condition is bounded above by $\beta$) where:
\newcommand{\cdomain}[1]{\textsf{St} \rightarrow \textbf{Dist}(#1 \times \textsf{St})}
  \begin{equation*}
    \T_{(\beta, f : \phi \rightarrow \psi)} A =
  \{ c \in \mathsf{D}_A \mid \forall s_1 .\ s_1 \models \phi \Rightarrow
  \exists (s_2, a) . ((s_2, a) \in c(s_1) \; \wedge \; s_2 \models \psi \; \wedge \;
  \underset{s_2}{\text{Pr}}[\neg \psi] \leq \beta) \}
\end{equation*}
The definition of the category-graded monad operations is essentially that of a
state monad combined with a distribution monad (e.g.,
$\eta_{(0, \phi),A} : A \rightarrow \T_{0, id_\phi : \phi \rightarrow
  \phi} A = x \mapsto \lambda s . \lambda p . (s, x)$) but whose set
of values is refined by the validity requirements of the above
definition. The generalised unit operation is
defined when $\models \phi' \rightarrow \phi$ then
$\geneta_{(0, f : \phi' \rightarrow \phi)} : A
\rightarrow \T_{(0,f)} A = x \mapsto \lambda s . \lambda p . (s, x)$
 where $f$ is the model of the implication in
 $\textbf{Prop}$.
Derivations are then interpreted as morphisms
$\interp{\vdash_\beta c : \phi \Rightarrow \psi} : 1 \rightarrow \T_{\beta, f : \phi \rightarrow \psi} 1$ with:
\begin{equation*}
\interp{\textbf{skip}} = \eta_{(0,\phi),1} : 1 \rightarrow \T_{0, id_\phi :
  \phi \Rightarrow \phi} 1
\qquad
\interp{c; c'} = \mu_{(\beta, f), (\beta', g), 1} \circ \T_{\beta, f :
  \phi \rightarrow \phi'} \interp{c'} \circ \interp{c}
\end{equation*}
The (weak) rule is modelled by generalised unit
and the approximation maps of the 2-cat-graded monad:
\begin{equation*}
\mu_{(\beta', g \circ f), (0, g')} \circ \T_{g \circ f} \geneta_{(0, g' : \phi' \rightarrow \phi)} \circ \mu_{(0, g), (\beta', f)} \circ \geneta_{(0, g : \psi \rightarrow \psi')} \circ \T_{\mathbf{f} : \beta \leq \beta'} \circ (\interp{c} : 1 \rightarrow \T_{\beta, f : \phi \rightarrow \psi} 1)
\end{equation*}
%


\section{Discussion}
\label{sec:discussion}
\subsection{A more direct relationship between parameterised and
  graded monads}

\noindent
The main motivation for category-graded monads is to unify graded and
parameterised monads. However, in some restricted cases, some
parameterised monads can be mapped to graded monads more directly.

\begin{proposition}
For a parameterised monad $(\Pm : \ixcat^\op \times \ixcat\!\rightarrow\![\basecat,\!\basecat], \mu^{\Pm}, \eta^{\Pm})$ where $\ixcat$ is a monoidal category $(\ixcat, \bullet, e)$
and $\basecat$ is \emph{finitely complete} (in that it has all finite limits)
there is a graded monad given by an \emph{end} in $\basecat$ (generalising universal
quantification) $\Gm \, f \, = \int_{i} \Pm \, (i, i \bullet f)$.
with unit
${\eta} : A \rightarrow \Gm {e} = \eta^{\Pm} : A \rightarrow \Pm (i, i \bullet e)$ and
graded monad multiplication as follows (which shows some
calculation):
{\small{
\begin{align*}
\Gm f \, \Gm g =\!\!\! \int_{i} \Pm (i, i \bullet f) \!\! \int_{j}\!\! \Pm (j, j \bullet g)
=\!\!\! \int_{i} \Pm (i, i \bullet f) \Pm (i \bullet f, (i \bullet f)  \bullet g) \xrightarrow{\mu^\Pm}\!\!\!\int_{i} \Pm (i, (i \bullet f) \bullet g) =\!\!\! \int_{i} \Pm (i, i \bullet (f \bullet g)) = \Gm (f \bullet g)
\end{align*}
}}
\end{proposition}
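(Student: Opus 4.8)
The plan is to treat $\Gm f = \int_i \Pm(i, i\bullet f)$ as an end in the functor category $[\basecat,\basecat]$, computed pointwise in $\basecat$, so that $(\Gm f)\,A = \int_i \Pm(i, i\bullet f)\,A$. Existence of this end requires $\basecat$ to have limits of the shape of the twisted-arrow category of $\ixcat$; when $\ixcat$ is finite this is a finite limit and finite completeness of $\basecat$ suffices. First I would check that $\Gm$ is functorial in the grade: the assignment $i \mapsto \Pm(i, i\bullet f)$ is a bifunctor $\ixcat^\op\times\ixcat\to[\basecat,\basecat]$ ($i$ contravariant in the first slot, covariant through $i\bullet f$ in the second, using that $\bullet$ is a bifunctor), so the end is defined; and for a morphism $\phi : f\to f'$ of $\ixcat$ the family $\Pm(i, i\bullet\phi)$ induces, by the universal property of ends, a map $\Gm\phi : \Gm f \to \Gm f'$, with $\Gm\,\id = \id$ and functoriality following from functoriality of $\Pm$ in its second argument and uniqueness of factorisations through ends. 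These $\Gm\phi$ are the effect-approximation maps of the graded monad.

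Next I would construct the unit and multiplication. For the unit, the family $\eta^{\Pm}_I : \Id \to \Pm(I,I)$ is dinatural in $I$ (the right-hand diagram of eq.~\ref{eq:param-dinaturality}), which is exactly the wedge condition, so it factors uniquely as $\eta^{\Gm} : \Id \to \int_i \Pm(i,i) \cong \int_i \Pm(i, i\bullet e) = \Gm e$, the isomorphism coming from the right unitor $i\bullet e\cong i$. For the multiplication I would follow the displayed calculation: precomposition by the fixed endofunctor $\Gm g$ preserves limits, so $\Gm f\circ\Gm g = \int_i\big(\Pm(i,i\bullet f)\circ\int_j\Pm(j,j\bullet g)\big)$; left-whiskering the end-projection $\pi_{i\bullet f}$ by $\Pm(i,i\bullet f)$ specialises the inner end at $j = i\bullet f$, giving a map into $\int_i \Pm(i,i\bullet f)\circ\Pm(i\bullet f,(i\bullet f)\bullet g)$; applying $\mu^{\Pm}$ componentwise composes the two instances of $\Pm$ to land in $\int_i\Pm(i,(i\bullet f)\bullet g)$; and the associator $(i\bullet f)\bullet g\cong i\bullet(f\bullet g)$ identifies this with $\Gm(f\bullet g)$.

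The step I expect to be the main obstacle is showing that this composite multiplication is well defined, i.e.\ that the family of maps out of $\Gm f\circ\Gm g$ indexed by $i$ assembles into a wedge and hence factors through the target end $\int_i\Pm(i,i\bullet(f\bullet g))$. The delicate point is the specialisation of the inner end variable to $j = i\bullet f$: its coherence as $i$ varies rests precisely on the dinaturality of $\mu^{\Pm}$ in its middle index $J$ (here instantiated at $i\bullet f$), combined with the wedge conditions of the two ends, which is essentially a Fubini-style end-calculus argument. I would verify the wedge condition by pasting the naturality squares of $\mu^{\Pm}$ in $I$ and $K$ with its dinaturality square in $J$ and the dinaturality of the two end projections.

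Finally, the graded-monad axioms (the two unit triangles, the associativity pentagon for $\mu^{\Gm}$, and naturality of $\mu^{\Gm}$ with respect to the approximation maps $\Gm\phi$) reduce, via the universal property of ends, since maps into an end are determined by their components, to the corresponding parameterised-monad axioms for $\eta^{\Pm}$ and $\mu^{\Pm}$ together with the monoidal coherence of $(\ixcat,\bullet,e)$, the unitor triangles and associator pentagon handling the index rewrites $i\bullet e\cong i$ and $(i\bullet f)\bullet g\cong i\bullet(f\bullet g)$. These verifications are routine once the wedge and dinaturality bookkeeping of the previous step is in place; if $\ixcat$ is non-strict the associator and unitor isomorphisms must be threaded through explicitly, but coherence guarantees the result is independent of how this is done.
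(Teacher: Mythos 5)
Your proof is correct and follows essentially the same route as the paper: the paper's own ``proof'' is just the displayed calculation, defining $\Gm\, f = \int_i \Pm(i, i \bullet f)$, obtaining the unit from dinaturality of $\eta^{\Pm}$, and obtaining the multiplication by specialising the inner end at $j = i \bullet f$ and applying $\mu^{\Pm}$ componentwise. Your write-up is in fact more careful than the paper's: you correctly recognise that the paper's second ``equality'' is really a canonical map (whiskering the end projection, not an identification), that the resulting family must be verified to be a wedge --- which rests on dinaturality of $\mu^{\Pm}$ in its middle index together with the wedge conditions of the two ends --- and that finite completeness of $\basecat$ only guarantees existence of the end when the relevant limit shape over $\ixcat$ is finite.
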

\noindent
Mapping the other way (graded into parameterised) is more
difficult as two indices are needed from one.

\subsection{Categorical semantics}
Given a notion of tensorial strength for category-graded monads, it is
straightforward to define a calculus whose denotational model is given by
the category-graded monad operations, i.e., a language
like the monadic metalanguage of~\citet{moggi1991monads} (akin to
Haskell's \emph{do}-notation) for sequential composition of effectful
computations. The core sequential composition and identity rules are
then of the form:
\begin{align*}
\dfrac{
  \Gamma \vdash t : \T_f \, A
\quad\, \Gamma, x : A \vdash t' : \T_g \, B
\quad\, f : I \rightarrow J
\quad\, g : J \rightarrow K}
{\Gamma \vdash \synLet{x}{t}{t'} : \T_{g \circ f} \, B}
\quad
%
%
\dfrac{\Gamma \vdash t : A}
      {\Gamma \vdash \langle{t}\rangle : \T_{id_I} \, A}
\end{align*}
The semantics of such a calculus resembles that of the monadic
meta language, taking the same form but with the added morphism grades. This
semantics then requires a notion of tensorial
strength in the category-graded setting, that is
a natural transformation
$\sigma_{f,A,B} : A \times \T_f \, B \rightarrow \T_f \, (A \times B)$
for all $f : I \rightarrow J \in
\morphs{\ixcat{}}$ satisfying a graded variant of the usual
monadic strength axioms.
Tensorial strength has been previously
considered for graded \citep{katsumata2014parametric} and
parameterised monads \citep{atkey2009parameterised}. Defining
a subsuming notion of \emph{strong category-graded monads}
as above appears to be straightforward.

\subsection{Further work}
\label{subsec:oidification-alt}


\paragraph{Category-graded comonads}
Various works employ \emph{graded comonads} to give the semantics
of \emph{coeffects}
(\cite{petricek2013coeffects,gaboardi2014,ghica2014}).
Category-graded monads dualise
straightforwardly to \emph{category-graded comonads} as a \emph{colax} functor $\D : \ixcat{} \rightarrow
\sing{\basecat{}}$ witnessed by natural transformations
$\varepsilon_{I} : \D_{\id_I} \xrightarrow{.} \Id$
and $\delta_{f,g} : \D_{g \circ f} \xrightarrow{.} \D_g \D_f$
with dual axioms to category-graded monads. The source of the
colax functor is dual to the $\ixcat^\op$ source of a category-graded
monad lax functor.

Graded comonads are category-graded comonads
however graded comonads usually have a
semiring structure on their indices, adding extra graded monoidal
structure (called \emph{exponential graded
  comonads}~\citep{gaboardi2016}). Further work is to incorporate such
additional structure into our formulation.

\paragraph{Polymonads, supermonads, and productoids}
\emph{Polymonads} provide a programming-oriented generalisation of
monads replacing the single type constructor $M$ of a monad with a
family of constructors
$\Sigma$~(\cite{hicks2014polymonadic}). For some
triples of $\textsf{M},\textsf{N},\textsf{P} \in \Sigma$ there is a
\emph{bind} operation of type:
$
\forall a, b . \sf{M} \, a \rightarrow
(a \rightarrow \sf{N} \, b) \rightarrow \sf{P}
\, b
$
which allows two effectful computations captured by $\sf{M}$
and $\sf{N}$ to be composed and encoded by $\sf{P}$. This generalises the
familiar \emph{bind} operation for monadic programming
\iflong
 of type
$\forall a, b . \sf{M} \, a \rightarrow (a \rightarrow \sf{M} \, b)
\rightarrow \sf{M} \, b$
\fi
which internalises Kleisli extension of a strong monad.

\iflong
\citet{bracker2016supermonads} provide a similar structure to
polymonads, called \emph{supermonads} where the constructors
$\M, \N, \Pm$ are derived from an indexed family.
The additional
power of supermonads is that each functor need not be an endofunctor,
but instead maps from some subcategory of the base category. This
provides models which have some type-predicate-like
restrictions on functors.
\fi

Another related structure is the
\emph{productoid} of~\citet{tate2013sequential} with an
indexed family of constructors $\T : \textit{Eff} \rightarrow
[\basecat, \basecat]$, indexed by an \emph{effectoid} structure \textit{Eff}
which is a kind of relational ordered monoid. Tate mentions the
possibility of modelling this 2-categorically.

Future work is to study a generalisation of these seemingly
related structures. 

\paragrapho{Hoare and Dijkstra Monads} \citet{nanevski2008hoare} introduced the notion of \emph{Hoare monads}
which resemble parameterised monads but indexed by pre-conditions
that are dependent on a heap value and post-conditions which are
dependent on a heap and the return value of a computation.  This
is generalised by the notion of \emph{Dijkstra}
monads~\citep{swamy2013verifying,maillard2019dijkstra}) indexed by a predicate
transformer which, given a post-condition on the final heap and
value, computes the weakest pre-condition of the computation.
Studying how these structures fit with other kinds of indexed
generalisations of monads, perhaps through the lense of lax functors, is interesting further
work.

\paragraph{Monads on indexed sets}
An alternate indexed generalisation of monads is to
consider monads over indexed sets, as in the work
of~\cite{mcbride2011functional}. This provides a fine-grained model of
effects which can react to external uncertainty in the program state
in a more natural way than the approaches discussed here. Exploring
the connection between monads on indexed sets/types and
cat-graded monads is further work.


\paragraph{The alternate oidification} Category-graded monads were presented as the oidification
of monads as a lax functor $\T : \mathbf{1} \rightarrow \sing{\basecat}$ to
$\T : \ixcat^\op \rightarrow \sing{\basecat}$. An
alternate, orthogonal oidification
is to generalise the target category as well
to the lax functor $\T : \ixcat^\op \rightarrow \cats$
where each $\T (f : I \rightarrow J)$ is then a functor from category
$\T I$ to $\T J$, rather than an endofunctor on a particular category.

Such a generalisation suggests further control over the models of
effects where for a morphism $f : I \rightarrow J$ the categories
$\T I$ and $\T J$ may differ. For example, they may be subcategories
of some overall base category as in the supermonads of
\citet{bracker2016supermonads}. Exploring this is future work.

\paragraph{Adjunctions}
\citet{fujii2016towards} show how a graded monad can be factored into
an adjunction with graded analogues of the Kleisli and Eilenberg-Moore
constructions. Relatedly, \cite{atkey2009algebras} gives notions of
Kleisli and Eilenberg-Moore category for parameterised
monads. Defining the appropriate adjunctions which gives
rise to category-graded monads is further work. This would provide the opportunity
to consider different kinds of semantics, for example,
for call-by-push-value (\cite{levy2006call}).

\subsection{Summary and concluding remarks}

Following B\'{e}nabou's perspective of
monads as degenerate lax functors $\T : \mathbf{1} \rightarrow
\sing{\basecat}$ involving single object
categories, we applied the notion of oidification, generalising
the source category from a point to a category  $\T : \ixcat^\op
\rightarrow \sing{\basecat}$. This yielded the
base notion of category-graded monads from which we
encompass the full power of graded and parameterised monads
as found in the recent literature. Namely:
\begin{itemize}[itemsep=0pt]
\item Monoid-graded monads are category-graded monads
(Prop.~\ref{prop:gmonad-to-monadoid});
\item Pomonoid-graded monads are 2-category graded monads
(Prop.~\ref{prop:gmonad-to-2-monadoid});
\item Discrete parameterised monads are category-graded monads
  (Prop.~\ref{prop:discrete-to-monadoid});
  \item Parameterised monads are category-graded monads with a
  generalised unit (Prop.~\ref{prop:param-monadoid-loose}) $\geneta_f : \Id \xrightarrow{.} \T_f$ (a right lax
  natural transformation) for all $f \in \morphs{\mathbb{S}}$ where
  $\mathbb{S} \subseteq \mathbb{I}$ and
  $\objs{\mathbb{S}} = \objs{\mathbb{I}}$.
\end{itemize}
Each result is an injection into
the more general structure. This could be made stronger
by a full and faithful embedding, but this requires
a definition of categories of parameterised monads and graded
monads, but the former is not provided in the literature.
This is future work.

Category-graded monads have the feel of a \emph{proof relevant}
version of a graded monad or parameterised monad where the index is
not merely a value, but a computation, that is a program or a proof.
We leave the door open for future applications to utilise this
generality, not just to provide a framework for including both graded
and parameterised monads into one cohesive whole, but also for programs
and models with more fine-grained computational indices.

\paragrapho{Acknowledgments} Thanks to Bob Atkey, Iavor Diatchki,
Shin-ya Katsumata, and Ben Moon for useful discussions and comments on
a draft of this paper. Specific thanks to Iavor for pointing out the
connection to the Grothendieck construction and Shin-ya for various
mathematical insights. We also thank the reviewers for their helpful feedback.

\bibliography{references}

\appendix

\section{Additional background and details}
\label{app:additional-background}

\begin{definition}
 \emph{2-categories} extend the notion of a category with
morphism between morphisms.
A 2-category $\mathbb{C}$ has a class of objects $\objs{\mathbb{C}}$, a
class of 1-morphisms (usual morphisms, between objects)
$\morphs{\mathbb{C}}$,
and a class of 2-morphisms (between 1-morphisms)
$\twomorphs{\mathbb{C}}$.
We write 2-morphisms in bold, e.g. $\mathbf{k} : f \Rightarrow g$
is a 2-morphism between 1-morphisms $f, g : A \rightarrow B$.

1-morphisms compose as usual via $\circ$ and have identities $id_A$
for all objects $A$. 2-morphisms have two notions of composition:
\emph{horizontal}, written $\circ_0$, which composes along objects and
\emph{vertical} written $\circ_1$ which composes along morphisms.
That is, horizontal composition of a 2-morphism $\textbf{i} : f \Rightarrow g$
where $f, g : A \rightarrow B$ and $\textbf{j} : f' \Rightarrow g'$
where $f', g' : B \rightarrow C$ yields
$\textbf{j} \circ_0 \textbf{i} : (f' \circ f) \Rightarrow (g' \circ
g)$. For morphisms $f, g, h : A \rightarrow B$ and
2-morphisms $\textbf{i} : f
\Rightarrow g$ and $\textbf{j} : g \Rightarrow h$ then
their vertical composition is $\textbf{j}
\circ_1 \textbf{i} : f \Rightarrow h$.

Both vertical and horizontal composition are associative
and have an identity via the identity 2-morphism $\textbf{id}_f : f \Rightarrow f$.
Additionally, vertical and horizontal composition satisfy the
\emph{interchange} axiom:
$(\textbf{i} \circ_1 \textbf{j}) \circ_0 (\textbf{k} \circ_1
\textbf{l})
= (\textbf{i} \circ_0 \textbf{k}) \circ_1 (\textbf{j} \circ_0
\textbf{l})$.
\label{def:2-cat}
\end{definition}

\begin{proposition}
  \label{prop:discrete-indiscrete}
  The following adjoint triple gives rise to functors for mapping
  (small) categories to their discrete and indiscrete versions:
  \begin{align*}
    \mathsf{d} \dashv \mathsf{ob} \dashv \mathsf{i} : \mathbf{Set} \rightarrow \cats
  \end{align*}
  where $\mathsf{d} : \mathbf{Set} \rightarrow \cats$ maps a set to a
  discrete category (the set gives the objects of the category),
  $\mathsf{ob} : \cats \rightarrow \mathbf{Set}$ maps a category
  to its set of objects, $\mathsf{i} : \mathbf{Set} \rightarrow \cats$
  maps a set to an indiscrete category with morphisms given
  by the unique pair of the objects, i.e., $(\mathsf{i} A)(a, b) = \{(a, b)\}$.
  Identities are pairs of identical objects and
  composition of $(a, b)$ with $(b, c)$ yields $(a, c)$; think dominoes.

  From this adjoint triple there arises two functors
  $\Delta = \mathsf{d} \circ \mathsf{ob} : \cats \rightarrow \cats$
  which discretises categories by discarding all but the identities
  and $\nabla = \mathsf{i} \circ \mathsf{ob} : \cats \rightarrow
  \cats$ which maps a category to its indiscrete form
  by replacing their morphisms with pairs of objects.
\end{proposition}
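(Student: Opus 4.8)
The plan is to establish the two constituent adjunctions separately and then to read off $\Delta$ and $\nabla$ as composites. For the left-hand adjunction $\mathsf{d} \dashv \mathsf{ob}$, I would fix a set $X$ and a (small) category $\basecat$ and exhibit a bijection
\[
\cats(\mathsf{d}\,X, \basecat) \;\cong\; \mathbf{Set}(X, \mathsf{ob}\,\basecat)
\]
natural in both arguments. The key observation is that $\mathsf{d}\,X$ has only identity morphisms, so a functor $F : \mathsf{d}\,X \to \basecat$ is forced to send each identity to an identity and is therefore determined entirely by its object assignment $X \to \mathsf{ob}\,\basecat$; conversely any such function extends uniquely and trivially to a functor, functoriality being automatic since there are no non-identity morphisms whose composites could fail to be preserved.

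For the right-hand adjunction $\mathsf{ob} \dashv \mathsf{i}$, I would fix a category $\basecat$ and a set $Y$ and exhibit a bijection
\[
\mathbf{Set}(\mathsf{ob}\,\basecat, Y) \;\cong\; \cats(\basecat, \mathsf{i}\,Y)
\]
natural in both arguments. Here the dual observation is used: $\mathsf{i}\,Y$ has exactly one morphism between any ordered pair of objects, so for a functor $G : \basecat \to \mathsf{i}\,Y$ the image $G(f)$ of any $f : a \to b$ is forced to be the unique morphism $G(a) \to G(b)$. Thus $G$ is pinned down by its object assignment $\mathsf{ob}\,\basecat \to Y$, and --- this is the only point requiring a brief check --- any object assignment extends to a genuine functor, since both sides of each functoriality equation are morphisms between the same pair of objects in $\mathsf{i}\,Y$, where a unique such morphism exists, so the equations hold on the nose.

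With both hom-set bijections in hand, naturality in each variable follows by a routine chase: pre- and post-composition with a functor (on the $\cats$ side) corresponds to pre- and post-composition with the underlying function on objects (on the $\mathbf{Set}$ side), because both transports act purely on objects. Finally, $\Delta = \mathsf{d} \circ \mathsf{ob}$ and $\nabla = \mathsf{i} \circ \mathsf{ob}$ are endofunctors on $\cats$ by composition, and unwinding the definitions confirms that $\Delta\,\basecat$ retains the objects of $\basecat$ while discarding all non-identity morphisms, whereas $\nabla\,\basecat$ retains the objects while replacing the morphisms by object-pairs, matching the descriptions in Definitions~\ref{def:cat-of-pairs}. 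The main (indeed only) obstacle is the uniqueness-of-extension check for functors into $\mathsf{i}\,Y$; once one notes that $\mathsf{i}\,Y$ is a preorder with a single morphism between each ordered pair of objects, this is immediate, so the proposition reduces to bookkeeping of the standard free/cofree characterisations of discrete and indiscrete categories.
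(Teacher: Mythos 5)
Your proof is correct. The paper states this proposition as a standard result and offers no proof of its own, and your argument---the two hom-set bijections $\cats(\mathsf{d}\,X, \basecat) \cong \mathbf{Set}(X, \mathsf{ob}\,\basecat)$ and $\mathbf{Set}(\mathsf{ob}\,\basecat, Y) \cong \cats(\basecat, \mathsf{i}\,Y)$, with the singleton-homset observation making functoriality into $\mathsf{i}\,Y$ automatic, followed by composing to obtain $\Delta$ and $\nabla$---is precisely the standard free/cofree argument the paper implicitly relies on, so nothing further is needed.
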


The following provides the details
of the inverse direction of Proposition~\ref{prop:param-monadoid-loose}.

\begin{proposition}
\label{prop:parameterised-monad-result-inverse}
Every category-graded monad with generalised unit indexed by
$\prodext{\ixcat}$ with $\mathbb{S} = \ixcat$ has a corresponding
 parameterised monad.
\end{proposition}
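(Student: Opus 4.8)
The plan is to verify that the backward construction sketched in the proof of Proposition~\ref{prop:param-monadoid-loose} produces a genuine parameterised monad and is a left inverse to the forward map. Recall the data: given a category-graded monad $\T : (\prodext{\ixcat})^\op \rightarrow \sing{\basecat}$ with generalised unit $\geneta$ and $\mathbb{S} = \ixcat$, we set $\Pm(I,I) = \T_{\inj_1 \id_I}$ and $\Pm(I,J) = \T_{\inj_2(I,J)}$ for $I \neq J$, define the morphism action $\Pm(f,g)\,h$ by the displayed composite of $\geneta_{\bar{f}}$, $\geneta_{\bar{g}}$, two multiplications, and $\T h$, and take $\eta^{\Pm}_{I} = \geneta_{\inj_1 \id_I}$ and $\mu^{\Pm}_{I,J,K} = \mu_{\inj_2(I,J),\inj_2(J,K)}$.

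First I would show that $\Pm$ is a well-defined bifunctor $\ixcat^\op \times \ixcat \rightarrow [\basecat,\basecat]$. Preservation of identities reduces, via the right generalised-unit axiom $\geneta_{\inj_1 \id_I} = \eta_I$, to the unit laws of Corollary~\ref{corollary:category-graded_monad_laws}, so the identity composite collapses. Preservation of composition is the crux: expanding both $\Pm(f' \circ f, g' \circ g)$ and $\Pm(f',g') \circ \Pm(f,g)$, I would merge the two successive applications of $\geneta$ on the outer ($\bar{f}$) side and on the inner ($\bar{g}$) side using the left generalised-unit axiom $\mu_{f,g} \circ \T_f \geneta_g \circ \geneta_f = \geneta_{g \circ f}$, and then reassociate and reconcile the resulting multiplications using the associativity axiom of Corollary~\ref{corollary:category-graded_monad_laws} together with the naturality of $\mu$. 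Throughout I would track, using the composition rule of $\prodext{\ixcat}$ (eq.~\ref{eq:prodext-comp}), that $\bar{g} \circ k \circ \bar{f} = \inj_2(I',J')$, so that the source and target of $\Pm(f,g)$ carry the required object pairs.

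Next I would verify the parameterised monad unit and associativity axioms, which read off directly from the specialised axioms of Corollary~\ref{corollary:category-graded_monad_laws} once $\eta^{\Pm}$ and $\mu^{\Pm}$ are identified with the generalised unit on identities and the multiplication on $\inj_2$-morphisms. For the dinaturality conditions (eq.~\ref{eq:param-dinaturality}), the dinaturality of $\eta^{\Pm}$ mirrors eq.~\ref{eq:geneta-functor-morphism}: it reduces to the two presentations of $\geneta_f$ (as $\Pm\,\id_I\,f \circ \eta^{\Pm}_I$ and $\Pm\,f\,\id_J \circ \eta^{\Pm}_J$) agreeing, which here follows from the generalised-unit axioms; the dinaturality of $\mu^{\Pm}$ follows from naturality of $\mu$ and the left generalised-unit square after expanding the morphism action inside the dinaturality square.

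Finally I would confirm that the construction inverts the forward map. On objects the roundtrip is immediate, since the forward $\T$ depends only on the source and target of a $\prodext{\ixcat}$-morphism, giving $\T_{\inj_1 \id_I} = \T_{\inj_2(I,I)} = \Pm(I,I)$ and $\T_{\inj_2(I,J)} = \Pm(I,J)$; the operations $\eta^{\Pm}$ and $\mu^{\Pm}$ are recovered by definition, and the reconstructed morphism action agrees with the original via the forward definition $\geneta_f = \Pm\,\id_I\,f \circ \eta^{\Pm}_I$ and bifunctoriality of the original $\Pm$. I expect the main obstacle to be the composition half of bifunctoriality, together with reconciling the diagonal representative $\T_{\inj_1 \id_I}$ against the off-diagonal $\T_{\inj_2(I,I)}$ used implicitly in the morphism-action formula: the bookkeeping is delicate because the outer factor $\bar{f}$ is pre-composed while the inner factor $\bar{g}$ is post-composed, and the $\inj_1$/$\inj_2$ case split of $\prodext{\ixcat}$ must be threaded uniformly through the two generalised units and the two multiplications.
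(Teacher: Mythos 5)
Your proposal is correct and follows essentially the same route as the paper's own proof: the same object/morphism definitions for $\Pm$, the same identification $\eta^{\Pm}_I = \geneta_{\inj_1 \id_I}$ and $\mu^{\Pm}_{I,J,K} = \mu_{\inj_2(I,J),\inj_2(J,K)}$, and the same strategy for the hard step --- proving composition-preservation of the bifunctor by merging consecutive generalised units via $\mu_{f,g} \circ \T_f\geneta_g \circ \geneta_f = \geneta_{g\circ f}$ and reconciling with $\mu$-associativity and naturality, which is exactly the paper's long equational chain. You correctly identify the crux (bifunctoriality of composition) and the delicate $\inj_1/\inj_2$ bookkeeping; your additional explicit checks of dinaturality and the parameterised-monad axioms go slightly beyond what the paper writes out, but are consistent with it.
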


\newcommand{\plabel}[1]{\{\footnotesize{\textit{#1}}\}}
\begin{proof}
Let $(\T, \eta, \mu, \geneta)$ be a category-graded monad with
generalised unit $\geneta$ and
 $\T : (\prodext{\ixcat})^\op \rightarrow \sing{\basecat}$
 and subcategory $\mathbb{S} = \ixcat$. Then, there is a parameterised
 monad with $\Pm : \ixcat^{\mathsf{op}}
  \times \ixcat \rightarrow [\basecat, \basecat]$
 defined on objects as $\Pm \, (I, I) = \T \; (\inj_1 \id_I)$ and
 $\Pm \, (I, J) = \T \; (\inj_2 \; (I, J))$ (for $I \neq J$)
 and on morphisms $f : I' \rightarrow I, g : J \rightarrow J'$ $\in
   \morphs{\ixcat}$, $h : A \rightarrow B \in \morphs{\basecat}$
 (with shorthand $k = \inj_2 (I, J)$, $\bar{f} =
 \inj_1 \, f$ and $\bar{g} = \inj_1 \, g$):
 \begin{equation}
 \!\!\!\!\!\!\!\Pm (f, g) \, h =\T_{\bar{g} \circ k \circ \bar{f}} h \,\circ\,
 \mu_{k \circ \bar{f}, \bar{g}}
 \,\circ\, \mu_{\bar{f},k}\T_{\bar{g}} \,\circ\, \T_{\bar{f}} \, \T_k \,
 \geneta_{\bar{g}} \,\circ\,
 \geneta_{\bar{f}}\T_k \;\; : \, \Pm \, (I, J) \, A \rightarrow \Pm \, (I', J') \,
   B
 \label{eq:functor-converse-geneta}
 \end{equation}
 The following shows part of this definition
 diagrammatically for clarity:
 \[
 \hspace{-0.3em}
 \xymatrix@C=1.3em@R=0.5em{
 \T_k A
 \ar[rr]^-{\geneta_{\bar{f}}\T_k\;}  & &
 \T_{\bar{f}} \T_k \ar[rr]^-{\T_{\bar{f}} \, \T_k \,
   \geneta_{\bar{g}}} A
 & &
 \T_{\bar{f}} \T_k \T_{\bar{g}} A
 \ar[rr]^-{\mu_{\bar{f},k,\T \bar{g}}} & &
 \T_{(k \circ \bar{f})} \T_{\bar{g}} A
 \ar[rr]^-{\mu_{k \circ \bar{f},\bar{g}}}
  & &
 \T_{\bar{g} \circ k \circ \bar{f}} A
 \ar[r]^-{\T h}
 &
 \T_{\bar{g} \circ k \circ \bar{f}} B
 }
 \]
 The morphism mapping of $\Pm$ is thus built from $\mu$ and $\geneta$
 and is well defined since the
 functor $\T (\bar{g} \circ (\inj_2 (I, J)) \circ \bar{f})$ equals
  $\T (\inj_2 (I', J'))$ by composition in
 $\prodext{\ixcat}$ (eq.~\ref{eq:prodext-comp}).
 Thus, the domain of the functor matches the object
 mapping of $\Pm (I', J')$. Since $\mathbb{S} = \ixcat$,
  $\geneta$ is defined for all morphisms of $\ixcat$  which
 are wrapped by $\inj_1$ to indicate they are
 morphisms on $\ixcat$ rather than object pairs.

 Bifunctoriality of $\Pm (f, g) h$ holds as follows:
\begin{itemize}
\item $\Pm (f' \circ f, g' \circ g) (h' \circ h)
  = \Pm (f, g') h' \circ \Pm (f', g) h\ :\
\Pm (K', I) A \rightarrow \Pm (I', K) C$

for all $h : A \rightarrow B, h' : B \rightarrow C
\in \morphs{\basecat}$ and $g : I \rightarrow J,
g' : J \rightarrow K \in \morphs{\ixcat}$
and $f : I' \rightarrow J'$, $f' : J' \rightarrow K'
\in \morphs{\ixcat}$

follows from
  the right lax natural transformation axiom on $\mu$ and associativity
  of $\mu$, via the following reasoning
(where $k = \inj_2 (K, I')$
and $k' = \inj_2 (J', J) = \bar{g} \circ k \circ \bar{f'}$
by the definition of composition for $\prodext{\ixcat}$
(eq.\ref{eq:prodext-comp}),
and a line above a morphism means that $\inj_1$ is applied to it):

{\scalebox{0.71}{
\hspace{-1em}
\begin{minipage}{1\linewidth}
\begin{align*}
& \Pm (f, g') h' \circ \Pm (f', g) h \\
\equiv \; &
\T_{\bar{g'} \circ k' \circ \bar{f}} h' \,\circ\,
 \mu_{k' \circ \bar{f}, \bar{g'}}
 \,\circ\, \mu_{\bar{f},k'}\T_{\bar{g'}} \,\circ\, \T_{\bar{f}} \, \T_{k'} \,
 \geneta_{\bar{g'}} \,\circ\,
 \geneta_{\bar{f}}\T_{k'}
\circ
\T_{\bar{g} \circ k \circ \bar{f'}} h \,\circ\,
 \mu_{k \circ \bar{f'}, \bar{g}}
 \,\circ\, \mu_{\bar{f'},k}\T_{\bar{g}} \,\circ\, \T_{\bar{f'}} \, \T_k \,
 \geneta_{\bar{g}} \,\circ\,
 \geneta_{\bar{f'}}\T_k \\
\plabel{$\geneta$ nat.}
%
%
\equiv \; &
\T_{\bar{g'} \circ k' \circ \bar{f}} h' \,\circ\,
 \mu_{k' \circ \bar{f}, \bar{g'}}
 \,\circ\, \mu_{\bar{f},k'}\T_{\bar{g'}} \,\circ\, \T_{\bar{f}} \, \T_{k'} \,
 \geneta_{\bar{g'}} \,\circ\,
\T_{\bar{f}} \T_{\bar{g} \circ k \circ \bar{f'}} h \,\circ\,
 \geneta_{\bar{f}}\T_{k'}
\circ
 \mu_{k \circ \bar{f'}, \bar{g}}
 \,\circ\, \mu_{\bar{f'},k}\T_{\bar{g}} \,\circ\, \T_{\bar{f'}} \, \T_k \,
 \geneta_{\bar{g}} \,\circ\,
 \geneta_{\bar{f'}}\T_k \\
\plabel{$\geneta$ nat.}
%
%
\equiv \; &
\T_{\bar{g'} \circ k' \circ \bar{f}} h' \,\circ\,
 \mu_{k' \circ \bar{f}, \bar{g'}}
 \,\circ\, \mu_{\bar{f},k'}\T_{\bar{g'}} \,\circ\,
\T_{\bar{f}} \, \T_{k'} \,
\T_{\bar{g'}} h \,\circ\,
\T_{\bar{f}} \T_{\bar{g} \circ k \circ \bar{f'}}
 \geneta_{\bar{g'}}
\,\circ\,
 \geneta_{\bar{f}}\T_{k'}
\circ
 \mu_{k \circ \bar{f'}, \bar{g}}
 \,\circ\, \mu_{\bar{f'},k}\T_{\bar{g}} \,\circ\, \T_{\bar{f'}} \, \T_k \,
 \geneta_{\bar{g}} \,\circ\,
 \geneta_{\bar{f'}}\T_k \\
\plabel{$\mu$ nat.}
\equiv \; &
\T_{\bar{g'} \circ k' \circ \bar{f}} h' \,\circ\,
 \mu_{k' \circ \bar{f}, \bar{g'}}
\,\circ\,
\T_{k' \circ \bar{f}}
\T_{\bar{g'}} h
 \,\circ\, \mu_{\bar{f},k'}\T_{\bar{g'}} \,\circ\,
\T_{\bar{f}} \T_{\bar{g} \circ k \circ \bar{f'}}
 \geneta_{\bar{g'}}
\,\circ\,
 \geneta_{\bar{f}}\T_{k'}
\circ
 \mu_{k \circ \bar{f'}, \bar{g}}
 \,\circ\, \mu_{\bar{f'},k}\T_{\bar{g}} \,\circ\, \T_{\bar{f'}} \, \T_k \,
 \geneta_{\bar{g}} \,\circ\,
 \geneta_{\bar{f'}}\T_k \\
\plabel{$\mu$ nat.}
\equiv \; &
\T_{\bar{g'} \circ k' \circ \bar{f}} h' \,\circ\,
\T_{\bar{g'} \circ k' \circ \bar{f}} h
\circ
 \mu_{k' \circ \bar{f}, \bar{g'}}
 \,\circ\, \mu_{\bar{f},k'}\T_{\bar{g'}} \,\circ\,
\T_{\bar{f}} \T_{\bar{g} \circ k \circ \bar{f'}}
 \geneta_{\bar{g'}}
\,\circ\,
 \geneta_{\bar{f}}\T_{k'}
\circ
 \mu_{k \circ \bar{f'}, \bar{g}}
 \,\circ\, \mu_{\bar{f'},k}\T_{\bar{g}} \,\circ\, \T_{\bar{f'}} \, \T_k \,
 \geneta_{\bar{g}} \,\circ\,
 \geneta_{\bar{f'}}\T_k \\
\plabel{func.}
\equiv \; &
\T_{\bar{g'} \circ k' \circ \bar{f}} (h' \circ h)
\circ
 \mu_{k' \circ \bar{f}, \bar{g'}}
 \,\circ\, \mu_{\bar{f},k'}\T_{\bar{g'}} \,\circ\,
\T_{\bar{f}} \T_{\bar{g} \circ k \circ \bar{f'}}
 \geneta_{\bar{g'}}
\,\circ\,
 \geneta_{\bar{f}}\T_{k'}
\circ
 \mu_{k \circ \bar{f'}, \bar{g}}
 \,\circ\, \mu_{\bar{f'},k}\T_{\bar{g}} \,\circ\, \T_{\bar{f'}} \, \T_k \,
 \geneta_{\bar{g}} \,\circ\,
 \geneta_{\bar{f'}}\T_k \\
%
%
\plabel{$\mu$ assoc.}
\equiv \; &
\T_{\bar{g'} \circ k' \circ \bar{f}} (h' \circ h) \circ
 \mu_{\bar{f}, \bar{g'} \circ k'}
 \,\circ\, \T_{\bar{f}} \mu_{k',\bar{g'}} \,\circ\,
\T_{\bar{f}} \T_{\bar{g} \circ k \circ \bar{f'}}
 \geneta_{\bar{g'}}
\,\circ\,
 \geneta_{\bar{f}}\T_{k'}
\circ
 \mu_{k \circ \bar{f'}, \bar{g}}
 \,\circ\, \mu_{\bar{f'},k}\T_{\bar{g}} \,\circ\, \T_{\bar{f'}} \, \T_k \,
 \geneta_{\bar{g}} \,\circ\,
 \geneta_{\bar{f'}}\T_k \\
\plabel{$\mu$ assoc.}
%
\equiv \; &
\T_{\bar{g'} \circ k' \circ \bar{f}} (h' \,\circ\, h) \,\circ\,
 \mu_{\bar{f}, \bar{g'} \circ k'}
 \,\circ\,
\T_{\bar{f}} \, \mu_{k',\bar{g'}} \,\circ\,
\T_{\bar{f}} \T_{\bar{g} \circ k \circ \bar{f'}}
 \geneta_{\bar{g'}}
\,\circ\,
 \geneta_{\bar{f}}\T_{k'}
\,\circ\,
\mu_{\bar{f'}, \bar{g} \circ k}
\,\circ\,
\T_{\bar{f'}} \mu_{k,\bar{g}}
\,\circ\, \T_{\bar{f'}} \, \T_k \,
 \geneta_{\bar{g}} \,\circ\,
 \geneta_{\bar{f'}}\T_k \\
\plabel{$\geneta$ nat.}
%
\equiv \; &
\T_{\bar{g'} \circ k' \circ \bar{f}} (h' \,\circ\, h) \,\circ\,
 \mu_{\bar{f}, \bar{g'} \circ k'}
 \,\circ\,
\T_{\bar{f}} \, \mu_{k',\bar{g'}} \,\circ\,
\T_{\bar{f}} \T_{\bar{g} \circ k \circ \bar{f'}}
 \geneta_{\bar{g'}}
\,\circ\,
\T_{\bar{f}}
\mu_{\bar{f'}, \bar{g} \circ k}
\,\circ\,
\T_{\bar{f}} \T_{\bar{f'}} \mu_{k,\bar{g}}
\,\circ\,
 \geneta_{\bar{f}}\T_{f'}\T_{k}\T_{\bar{g}}
\,\circ\,
\T_{\bar{f'}} \, \T_k \,
 \geneta_{\bar{g}} \,\circ\,
 \geneta_{\bar{f'}}\T_k \\
\plabel{$\mu$ nat.}
%
\equiv \; &
\T_{\bar{g'} \circ k' \circ \bar{f}} (h' \,\circ\, h) \,\circ\,
 \mu_{\bar{f}, \bar{g'} \circ k'}
 \,\circ\,
\T_{\bar{f}} \, \mu_{k',\bar{g'}} \,\circ\,
\T_{\bar{f}}
\mu_{\bar{f'}, \bar{g} \circ k}  \T_{\bar{g'}}
\,\circ\,
\T_{\bar{f}} \T_{\bar{f'}} \T_{\bar{g} \circ k}
 \geneta_{\bar{g'}}
\,\circ\,
\T_{\bar{f}} \T_{\bar{f'}} \mu_{k,\bar{g}}
\,\circ\,
 \geneta_{\bar{f}}\T_{f'}\T_{k}\T_{\bar{g}}
\,\circ\,
\T_{\bar{f'}} \, \T_k \,
 \geneta_{\bar{g}} \,\circ\,
 \geneta_{\bar{f'}}\T_k \\
\plabel{$\mu$ nat.}
%
\equiv \; &
\T_{\bar{g'} \circ k' \circ \bar{f}} (h' \,\circ\, h) \,\circ\,
 \mu_{\bar{f}, \bar{g'} \circ k'}
 \,\circ\,
\T_{\bar{f}} \, \mu_{k',\bar{g'}} \,\circ\,
\T_{\bar{f}}
\mu_{\bar{f'}, \bar{g} \circ k} \T_{\bar{g'}}
\,\circ\,
\T_{\bar{f}} \T_{\bar{f'}} \mu_{k,\bar{g}} \T_{\bar{g'}}
\,\circ\,
\T_{\bar{f}} \T_{\bar{f'}} \T_{k} \T_{\bar{g}} \geneta_{\bar{g'}}
\,\circ\,
 \geneta_{\bar{f}}\T_{f'}\T_{k}\T_{\bar{g}}
\,\circ\,
\T_{\bar{f'}} \, \T_k \,
 \geneta_{\bar{g}} \,\circ\,
 \geneta_{\bar{f'}}\T_k
\\
\plabel{$k'$ def.}
\equiv \; &
\T_{\bar{g'} \circ \bar{g} \circ k \circ \bar{f'} \circ \bar{f}}
            (h'\circ h) \,\circ\,
\mu_{\bar{f}, \bar{g'} \circ \bar{g} \circ k \circ \bar{f'}}
\circ
\T_{\bar{f}} \mu_{\bar{g} \circ k \circ \bar{f'}, \bar{g'}}
\circ
\T_{\bar{f}} \mu_{\bar{f'},\bar{g} \circ k} \T_{\bar{g'}}
\circ
\T_{\bar{f}} \T_{\bar{f'}} \mu_{k, \bar{g}} \T_{\bar{g'}}
\circ
\T_{\bar{f}} \T_{\bar{f'}} \, \T_k \, \T_{\bar{g}} \geneta_{\bar{g'}} \circ
\geneta_{\bar{f}} \T_{\bar{f'}} \T_k \T_{\bar{g}} \circ
\T_{\bar{f'}} \, \T_k \, \geneta_{\bar{g}} \,\circ\,
\geneta_{\bar{f'}} \T_k
\\
\plabel{$\mu$ assoc.}
\equiv \; &
\T_{\bar{g'} \circ \bar{g} \circ k \circ \bar{f'} \circ \bar{f}}
            (h'\circ h) \,\circ\,
\mu_{\bar{f}, \bar{g'} \circ \bar{g} \circ k \circ \bar{f'}}
\circ
\T_{\bar{f}} \mu_{\bar{f'}, \bar{g'} \circ \bar{g} \circ k}
\circ
\T_{\bar{f}} \T_{\bar{f'}} \mu_{\bar{g} \circ k, \bar{g'}}
\circ
\T_{\bar{f}} \T_{\bar{f'}} \mu_{k, \bar{g}} \T_{\bar{g'}}
\circ
\T_{\bar{f}} \T_{\bar{f'}} \, \T_k \, \T_{\bar{g}} \geneta_{\bar{g'}} \circ
\geneta_{\bar{f}} \T_{\bar{f'}} \T_k \T_{\bar{g}} \circ
\T_{\bar{f'}} \, \T_k \, \geneta_{\bar{g}} \,\circ\,
\geneta_{\bar{f'}} \T_k
\\
\plabel{$\mu$ assoc.}
\equiv \; &
\T_{\bar{g'} \circ \bar{g} \circ k \circ \bar{f'} \circ \bar{f}}
            (h'\circ h) \,\circ\,
\mu_{\bar{f}, \bar{g'} \circ \bar{g} \circ k \circ \bar{f'}}
\circ
\T_{\bar{f}} \mu_{\bar{f'}, \bar{g'} \circ \bar{g} \circ k}
\circ
\T_{\bar{f}} \T_{\bar{f'}} \mu_{k, \bar{g'} \circ \bar{g} }
            \,\circ\,
\T_{\bar{f}} \T_{\bar{f'}} \, \T_k \, \mu_{\bar{g}, \bar{g'}} \circ
\T_{\bar{f}} \T_{\bar{f'}} \, \T_k \, \T_{\bar{g}} \geneta_{\bar{g'}} \circ
\geneta_{\bar{f}} \T_{\bar{f'}} \T_k \T_{\bar{g}} \circ
\T_{\bar{f'}} \, \T_k \, \geneta_{\bar{g}} \,\circ\,
\geneta_{\bar{f'}} \T_k
\\
\plabel{$\mu$ assoc.}
\equiv \; &
\T_{\bar{g'} \circ \bar{g} \circ k \circ \bar{f'} \circ \bar{f}}
            (h'\circ h) \,\circ\,
\mu_{\bar{f}, \bar{g'} \circ \bar{g} \circ k \circ \bar{f'}}
\circ
\T_{\bar{f}} \mu_{\bar{f'}, \bar{g'} \circ \bar{g} \circ k}
\circ
\T_{\bar{f}} \T_{\bar{f'}} \mu_{k, \bar{g'} \circ \bar{g} }
            \,\circ\,
\T_{\bar{f}} \T_{\bar{f'}} \, \T_k \, \mu_{\bar{g}, \bar{g'}} \circ
\geneta_{\bar{f}} \T_{\bar{f'}} \T_k \T_{\bar{g'} \circ \bar{g}}
\circ
\T_{\bar{f'}} \, \T_k \, \T_{\bar{g}} \geneta_{\bar{g'}} \circ
\T_{\bar{f'}} \, \T_k \, \geneta_{\bar{g}} \,\circ\,
\geneta_{\bar{f'}} \T_k
\\
\plabel{$\mu$ assoc.}
\equiv \; &
\T_{\bar{g'} \circ \bar{g} \circ k \circ \bar{f'} \circ \bar{f}}
            (h'\circ h) \,\circ\,
\mu_{\bar{f'} \circ \bar{f}, \bar{g'} \circ \bar{g} \circ k}
\circ
 \mu_{\bar{f}, \bar{f'}} \T_{\bar{g'} \circ \bar{g} \circ k} \circ
\T_{\bar{f}} \T_{\bar{f'}} \mu_{k, \bar{g'} \circ \bar{g} }
            \,\circ\,
\T_{\bar{f}} \T_{\bar{f'}} \, \T_k \, \mu_{\bar{g}, \bar{g'}} \circ
\geneta_{\bar{f}} \T_{\bar{f'}} \T_k \T_{\bar{g'} \circ \bar{g}}
\circ
\T_{\bar{f'}} \, \T_k \, \T_{\bar{g}} \geneta_{\bar{g'}} \circ
\T_{\bar{f'}} \, \T_k \, \geneta_{\bar{g}} \,\circ\,
\geneta_{\bar{f'}} \T_k \\
\plabel{$\mu$ nat.}
\equiv \; &
\T_{\bar{g'} \circ \bar{g} \circ k \circ \bar{f'} \circ \bar{f}}
            (h'\circ h) \,\circ\,
\mu_{\bar{f'} \circ \bar{f}, \bar{g'} \circ \bar{g} \circ k}
\circ
\T_{\bar{f'} \circ \bar{f}} \mu_{k, \bar{g'} \circ \bar{g} }
            \,\circ\,
 \mu_{\bar{f}, \bar{f'}} \T_k \T_{\bar{g'} \circ \bar{g}} \circ
\T_{\bar{f}} \T_{\bar{f'}} \, \T_k \, \mu_{\bar{g}, \bar{g'}} \circ
\geneta_{\bar{f}} \T_{\bar{f'}} \T_k \T_{\bar{g'} \circ \bar{g}}
\circ
\T_{\bar{f'}} \, \T_k \, \T_{\bar{g}} \geneta_{\bar{g'}} \circ
\T_{\bar{f'}} \, \T_k \, \geneta_{\bar{g}} \,\circ\,
\geneta_{\bar{f'}} \T_k
\\
\plabel{$\mu$ assoc.}
\equiv \; &
\T_{\bar{g'} \circ \bar{g} \circ k \circ \bar{f'} \circ \bar{f}} (h'\circ h) \,\circ\,
 \mu_{k \circ \bar{f'} \circ \bar{f}, \bar{g'} \circ \bar{g}}
 \,\circ\, \mu_{\bar{f'} \circ \bar{f},k}\T_{\bar{g'} \circ \bar{g}}
            \,\circ\,
 \mu_{\bar{f}, \bar{f'}} \T_k \T_{\bar{g'} \circ \bar{g}} \circ
\T_{\bar{f}} \T_{\bar{f'}} \, \T_k \, \mu_{\bar{g}, \bar{g'}} \circ
\geneta_{\bar{f}} \T_{\bar{f'}} \T_k \T_{\bar{g'} \circ \bar{g}}
\circ
\T_{\bar{f'}} \, \T_k \, \T_{\bar{g}} \geneta_{\bar{g'}} \circ
\T_{\bar{f'}} \, \T_k \, \geneta_{\bar{g}} \,\circ\,
\geneta_{\bar{f'}} \T_k
\\
\plabel{$\geneta$ nat.}
\equiv \; &
\T_{\bar{g'} \circ \bar{g} \circ k \circ \bar{f'} \circ \bar{f}} (h'\circ h) \,\circ\,
 \mu_{k \circ \bar{f'} \circ \bar{f}, \bar{g'} \circ \bar{g}}
 \,\circ\, \mu_{\bar{f'} \circ \bar{f},k}\T_{\bar{g'} \circ \bar{g}}
            \,\circ\,
 \mu_{\bar{f}, \bar{f'}} \T_k \T_{\bar{g'} \circ \bar{g}} \circ
\geneta_{\bar{f}} \T_{\bar{f'}} \T_k \T_{\bar{g'} \circ \bar{g}}
\circ
\T_{\bar{f'}} \, \T_k \, \mu_{\bar{g}, \bar{g'}} \circ
\T_{\bar{f'}} \, \T_k \, \T_{\bar{g}} \geneta_{\bar{g'}} \circ
\T_{\bar{f'}} \, \T_k \, \geneta_{\bar{g}} \,\circ\,
\geneta_{\bar{f'}} \T_k
\\
\plabel{func.}
\equiv \; &
\T_{\bar{g'} \circ \bar{g} \circ k \circ \bar{f'} \circ \bar{f}} (h'\circ h) \,\circ\,
 \mu_{k \circ \bar{f'} \circ \bar{f}, \bar{g'} \circ \bar{g}}
 \,\circ\, \mu_{\bar{f'} \circ \bar{f},k}\T_{\bar{g'} \circ \bar{g}}
            \,\circ\,
 \mu_{\bar{f}, \bar{f'}} \T_k \T_{\bar{g'} \circ \bar{g}} \circ
\geneta_{\bar{f}} \T_{\bar{f'}} \T_k \T_{\bar{g'} \circ \bar{g}}
\circ
\T_{\bar{f'}} \, \T_k \,
 (\mu_{\bar{g}, \bar{g'}} \circ \T_{\bar{g}} \geneta_{\bar{g'}} \circ \geneta_{\bar{g}}) \,\circ\,
\geneta_{\bar{f'}} \T_k
\\
\plabel{$\geneta$ nat.}
\equiv \; &
\T_{\bar{g'} \circ \bar{g} \circ k \circ \bar{f'} \circ \bar{f}} (h'\circ h) \,\circ\,
 \mu_{k \circ \bar{f'} \circ \bar{f}, \bar{g'} \circ \bar{g}}
 \,\circ\, \mu_{\bar{f'} \circ \bar{f},k}\T_{\bar{g'} \circ \bar{g}}
            \,\circ\,
 \mu_{\bar{f}, \bar{f'}} \T_k \T_{\bar{g'} \circ \bar{g}} \circ
 \T_{\bar{f}} \T_{\bar{f'}} \, \T_k \,
 (\mu_{\bar{g}, \bar{g'}} \circ \T_{\bar{g}} \geneta_{\bar{g'}} \circ \geneta_{\bar{g}}) \,\circ\,
\geneta_{\bar{f}} \T_{\bar{f'}} \T_k
\circ \geneta_{\bar{f'}} \T_k
\\
\plabel{$\mu$ nat.}
\equiv \; &
\T_{\bar{g'} \circ \bar{g} \circ k \circ \bar{f'} \circ \bar{f}} (h'\circ h) \,\circ\,
 \mu_{k \circ \bar{f'} \circ \bar{f}, \bar{g'} \circ \bar{g}}
 \,\circ\, \mu_{\bar{f'} \circ \bar{f},k}\T_{\bar{g'} \circ \bar{g}} \,\circ\, \T_{\bar{f'} \circ \bar{f}} \, \T_k \,
 (\mu_{\bar{g}, \bar{g'}} \circ \T_{\bar{g}} \geneta_{\bar{g'}} \circ \geneta_{\bar{g}}) \,\circ\,
 \mu_{\bar{f}, \bar{f'}} \T_k \circ \geneta_{\bar{f}} \T_{\bar{f'}} \T_k
\circ \geneta_{\bar{f'}} \T_k
\\
\plabel{$\geneta$ nat.}
\equiv \; &
\T_{\bar{g'} \circ \bar{g} \circ k \circ \bar{f'} \circ \bar{f}} (h'\circ h) \,\circ\,
 \mu_{k \circ \bar{f'} \circ \bar{f}, \bar{g'} \circ \bar{g}}
 \,\circ\, \mu_{\bar{f'} \circ \bar{f},k}\T_{\bar{g'} \circ \bar{g}} \,\circ\, \T_{\bar{f'} \circ \bar{f}} \, \T_k \,
 (\mu_{\bar{g}, \bar{g'}} \circ \T_{\bar{g}} \geneta_{\bar{g'}} \circ \geneta_{\bar{g}}) \,\circ\,
 \mu_{\bar{f}, \bar{f'}} \T_k \circ \T_f \geneta_{\bar{f'}} \T_k \circ
            \geneta_{\bar{f}} \T_k \\
\plabel{$\geneta$/$mu$}
\equiv \; &
\T_{\bar{g'} \circ \bar{g} \circ k \circ \bar{f'} \circ \bar{f}} (h'\circ h) \,\circ\,
 \mu_{k \circ \bar{f'} \circ \bar{f}, \bar{g'} \circ \bar{g}}
 \,\circ\, \mu_{\bar{f'} \circ \bar{f},k}\T_{\bar{g'} \circ \bar{g}} \,\circ\, \T_{\bar{f'} \circ \bar{f}} \, \T_k \,
 \geneta_{\bar{g'} \circ \bar{g}} \,\circ\,
 \mu_{\bar{f}, \bar{f'}} \T_k \circ \T_f \geneta_{\bar{f'}} \T_k \circ
            \geneta_{\bar{f}} \T_k \\
\plabel{$\geneta$/$mu$}
\equiv \; &
\T_{\bar{g'} \circ \bar{g} \circ k \circ \bar{f'} \circ \bar{f}} (h'\circ h) \,\circ\,
 \mu_{k \circ \bar{f'} \circ \bar{f}, \bar{g'} \circ \bar{g}}
 \,\circ\, \mu_{\bar{f'} \circ \bar{f},k}\T_{\bar{g'} \circ \bar{g}} \,\circ\, \T_{\bar{f'} \circ \bar{f}} \, \T_k \,
 \geneta_{\bar{g'} \circ \bar{g}} \,\circ\,
 \geneta_{\bar{f'} \circ \bar{f}}\T_k \\
\equiv \; & \Pm (f' \circ f, g' \circ g) (h' \circ h) \\
\end{align*}
\end{minipage}
}} \\

\item  $\Pm (\id_I, \id_I) \id_A = \id_{\Pm(I, I) A}$ follows
 from the right lax natural transformation axiom on $\eta$
 and the unit properties of cat-graded monads,
via the following reasoning (where $k = \inj_2 (I, I)$):
\begin{align*}
& \Pm (\id_I, \id_I) \id_A \\
\equiv \; & \T_{\bar{\id} \circ k \circ \bar{\id}} id \,\circ\,
\mu_{k \circ \bar{\id}, \bar{\id}}
\,\circ\, \mu_{\bar{\id},k}\T_{\bar{\id}} \,\circ\, \T_{\bar{\id}} \, \T_k \,
\geneta_{\bar{\id}} \,\circ\,
\geneta_{\bar{\id}}\T_k \\
\plabel{$\geneta/\id$ property}
\equiv \; & \T_{k} id \,\circ\,
\mu_{k, \bar{\id}}
\,\circ\, \mu_{\bar{\id},k}\T_{\bar{\id}} \,\circ\, \T_{\bar{\id}} \, \T_k \,
\eta_{I} \,\circ\,
\eta_{I}\T_k\; \\
\plabel{$\mu$ naturality}
\equiv \; & \T_{k} id \,\circ\,
\mu_{k, \bar{\id}}
\,\circ\, \eta_{I} \circ \mu_{\bar{\id},k}\T_{\bar{\id}} \,\circ\,
\eta_{I}\T_k \\
\plabel{$\mu$ unitality}
\equiv \; & \T_{k} id \,\circ\,
\mu_{k, \bar{\id}}
 \,\circ\,
\eta_{I}\T_k \\
\plabel{$\mu$ unitality}
\equiv \; & \T_{k} id \\
\plabel{functor identity property}
\equiv \; & id_{\T_k} \\
\equiv \; & id_{\Pm (I, I) A}
\end{align*}
\end{itemize}

\noindent
 The parameterised monad operations follow from
 the cat-graded monad $\mu^{\Pm}_{I,J,K} =
 \mu_{\inj_2 (I,J), \inj_2 (J,K)}$ and generalised unit $\eta^{\Pm}_{I} =
 \geneta_{\inj_1 \id_I}$.  This mapping is inverse to the former, e.g., with
 $\eta^{\Pm}_I = \geneta_{\inj_1(\id_I)} = \Pm id_I id_I
 \circ \eta^{\Pm}_I = \eta^{\Pm}_I$ ($\mu$ is more direct).
 For $\Pm \, (f, g) \, h$, substituting
 eq.~\ref{eq:geneta-functor-morphism} into
 eq.~\ref{eq:functor-converse-geneta} and applying
 the category-graded monad laws yields $\Pm (f, g) h$ (calculation
 elided for brevity). Therefore,
 there is just one category-graded monad with
 generalised unit for every parameterised monad.
\end{proof}

\end{document}